\documentclass[12pt,journal,draftcls,onecolumn]{IEEEtran}
\usepackage{amsmath,amssymb,mathrsfs}
\usepackage{epsfig,epsf,subfigure,graphicx,graphics}
\usepackage{url}


\newtheorem{lemma}{Lemma}[section]
\newtheorem{theorem}[lemma]{Theorem}

\newtheorem{definition}[lemma]{Definition}
\newtheorem{remark}[lemma]{Remark}

\newcommand{\SNR}{\mathsf{SNR}}
\newcommand{\INR}{\mathsf{INR}}

\newcommand{\Var}{\mathrm{Var}}
\newcommand{\Cov}{\mathrm{Cov}}

\newcommand{\etal}{{\it et al.}}

\newcommand{\aaaa}{\mathrm{(a)}}
\newcommand{\bbbb}{\mathrm{(b)}}
\newcommand{\cccc}{\mathrm{(c)}}
\newcommand{\dddd}{\mathrm{(d)}}
\newcommand{\eeee}{\mathrm{(e)}}
\newcommand{\ffff}{\mathrm{(f)}}

\newcommand{\lp}{\left(}
\newcommand{\rp}{\right)}
\newcommand{\lb}{\left[}
\newcommand{\rb}{\right]}
\newcommand{\lbp}{\left\{}
\newcommand{\rbp}{\right\}}
\newcommand{\ul}{\underline}
\newcommand{\ol}{\overline}
\newcommand{\mcal}{\mathcal}

\newcommand{\msf}{\mathsf}

\newcommand{\mb}{\mathbf}
\newcommand{\mbb}{\mathbb}

\newcommand{\ra}{\rightarrow}

\graphicspath{{fig/}}

\allowdisplaybreaks

\title{Distributed Interference Cancellation in Multiple Access Channels}
\author{
\authorblockN{I-Hsiang Wang}\\
\authorblockA{Wireless Foundations\\
University of California at Berkeley,\\
Berkeley, California 94720, USA\\
\textsf{ihsiang@eecs.berkeley.edu}}
}

\date{}

\begin{document}
\maketitle

\begin{abstract}
In this paper, we consider a Gaussian multiple access channel with multiple independent additive white Gaussian interferences. Each interference is known to exactly one transmitter non-causally. 
The capacity region is characterized to within a constant gap regardless of channel parameters. These results are based on a layered modulo-lattice scheme which realizes distributed interference cancellation.
\end{abstract}

\begin{IEEEkeywords}
Dirty paper coding, dirty multiple access channels, distributed interference cancellation, modulo-lattice scheme, binary expansion model.
\end{IEEEkeywords}

\section{Introduction}
In modern wireless communication systems, interference has become the major barrier for efficient utilization of available spectrum. 
In many scenarios, interferences are originated from sources close to transmitters and hence can be inferred by intelligent transmitters, while receivers cannot due to physical limitations. 
With the knowledge of interferences as side information, transmitters are able to encode their information against interferences and mitigate them, even though receivers cannot distinguish interferences from desired signals. The simplest information theoretic model for studying such interference mitigation is the single-user point-to-point dirty-paper channel \cite{Costa_83}, which is a special case of state-dependent memoryless channels with the state\footnote{In dirty-paper channel, the state is the additive interference.} known non-causaully to the transmitter \cite{GelfandPinsker_80}. It is shown that the effect of interference can be completely removed in the additive white Gaussian noise (AWGN) channel when the interference is also additive white Gaussian \cite{Costa_83}. 
As for multi-user scenarios, it has been found that when perfect state information (the additive interference) is available non-causally at all transmitters, the capacity region of the AWGN multiple access channel (MAC) is not affected by the additive white Gaussian interference \cite{GelfandPinsker_84} \cite{KimSutivong_04}. When the sate information is known \emph{partially} to different transmitters in the MAC, however, the capacity loss caused by the interference is unbounded as the signal-to-noise ratios increase \cite{PhilosofZamir_07} \cite{Somekh-BaruchShamai_08}. Since each transmitter only has partial knowledge about the interference, interference cancellation has to be realized in a \emph{distributed} manner.

In this paper, we consider a $K$-user Gaussian MAC with $K$ independent additive white Gaussian interferences. Each interference is known to exactly one transmitter non-causally. Transmitter $i$, for all $i=1,\ldots,K$, aims to deliver a message $w_i$ to the receiver reliably through the channel depicted in Fig.~\ref{fig_Model}, where
\begin{align*}
y = \sum_{i=1}^K x_i + \sum_{i=1}^K s_i + z,
\end{align*}
and $z\sim\mcal{N}\lp 0, N_o\rp$ is the AWGN noise. Interference $s_i\sim\mcal{N}\lp 0,Q_i\rp$, $i=1,\ldots,K$, independent of everything else, is known non-causally to transmitter $i$ \emph{only}. Power constraint at transmitter $i$ is $P_i$, $i=1,\ldots,K$. Define channel parameters $\SNR_i := P_i / N_o$, $\INR_i := Q_i / N_o$, for $i=1,2$. 
User $i$'s rate is denoted by $R_i$, $i=1,\ldots,K$. Throughout this paper, without loss of generality we assume that $P_1\ge P_2 \ge \ldots \ge P_K$.


\begin{figure}[htbp]
{\center
\includegraphics[height=1.5in]{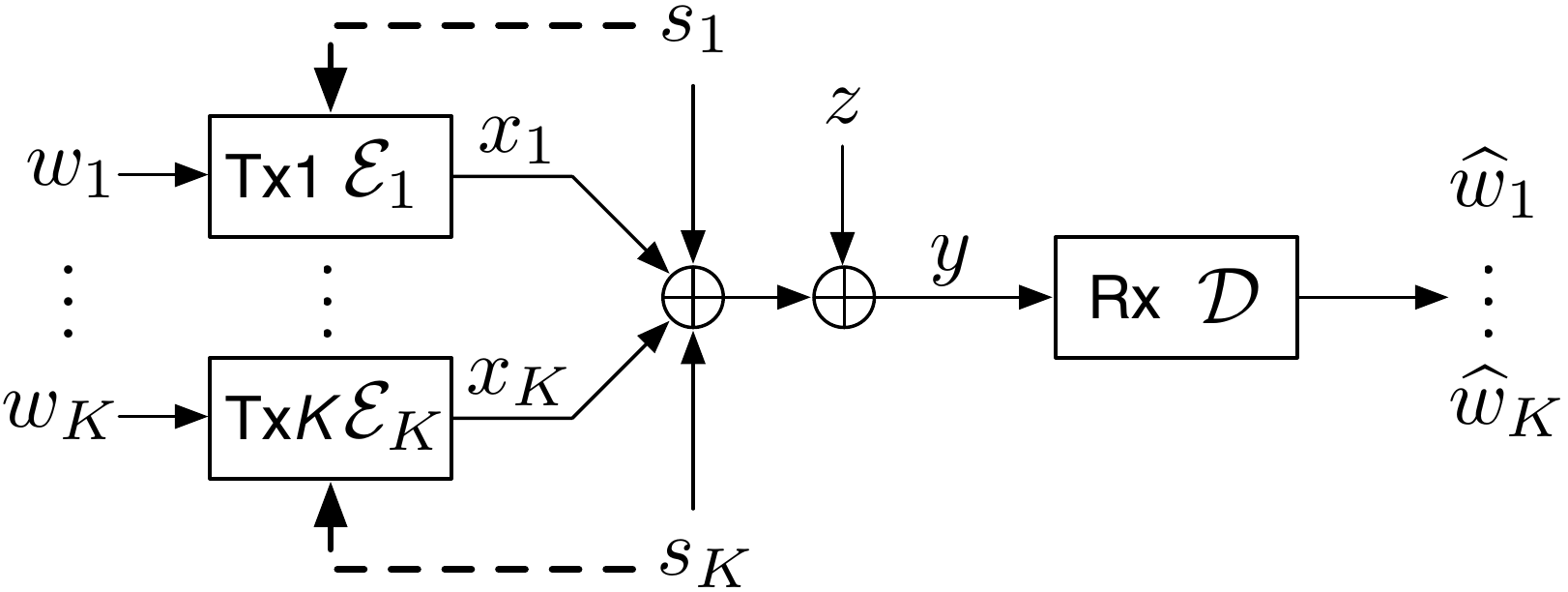}
\caption{Channel Model}
\label{fig_Model}
}
\end{figure}

\subsection{Related Works}
State-dependent networks with partial state knowledge available at different nodes have been studied in various scenarios. Kotagiri \etal \cite{KotagiriLaneman_08} study the state-dependent two-user MAC with state non-causally known to a transmitter, and for the Gaussian case they characterize the capacity asymptotically at infinite interference ($K=2, Q_1=\infty, Q_2=0$) as the informed transmitter's power grows to infinity. Somekh-Baruch \etal \cite{Somekh-BaruchShamai_08} study the problem with the same set-up as \cite{KotagiriLaneman_08} while the informed transmitter knows the other's message, and they characterize the capacity region completely. Zaidi \etal \cite{ZaidiKotagiri_09} study another case of degraded message set. The achievability part of \cite{KotagiriLaneman_08}, \cite{Somekh-BaruchShamai_08}, and \cite{ZaidiKotagiri_09} are based on random binning. Philosof \etal \cite{PhilosofZamir_07}, on the other hand, characterize the capacity region of the doubly-dirty MAC to within a constant gap at infinite interference (i.e., $K=2$, $Q_1=Q_2=\infty$), by lattice strategies \cite{ErezShamai_05}. They also show that strategies based on Gaussian random binning is unboundedly worse than lattice-based strategies. Zaidi \etal \cite{ZaidiKotagiri_10} \cite{ZaidiShamai_10} and Akhbari \etal \cite{AkhbariMirmohseni_09} study a state-dependent relay channel where the state is only known either at the source or the relay.


\subsection{Main Contribution}\label{subsec_Main}
We characterize the capacity region of the channel in Fig.~\ref{fig_Model} to within $K\log_2 K$ bits, regardless of channel parameters $P_i$'s, $Q_i$'s, and $N_o$. The constant gap only depends on the number of users in the channel and is independent of channel parameters, providing a strong guarantee on the performance for any fixed $K$. Our approach to this problem is first investigating a \emph{binary expansion model} of the original channel. The binary expansion model is a natural extension of the linear deterministic model proposed in \cite{AvestimehrDiggavi_07} to the case with additive interferences known to transmitters. After characterizing the capacity region of the binary expansion model, we then make use of the intuitions and techniques developed there to derive outer bounds and build up achievability results for the original Gaussian problem. Such approach has been successfully applied to various problems in network information theory, including \cite{BreslerParekh_08}, \cite{AvestimehrDiggavi_09}, \cite{RiniTuninetti_10}, \cite{WangTse_09}, \cite{WangTse_10}, etc. 

For the achievability part we propose a layered modulo-lattice scheme consisting of $K$ layers, based on the intuition drawn from the study of the binary expansion model. Layer $i$ is shared among user $1,\ldots,i$, and the hierarchy of the layers is $1\ra2\ra\ldots\ra K$, from the top to the bottom. Each layer treats the signals sent at higher layers as \emph{interference}, each of which is known non-causally to exactly one transmitter. In each layer $i\in1,2\ldots,K$, we use a modulo-lattice scheme to realize distributed interference cancellation, which is a simpler version of the single layer scheme in \cite{PhilosofZamir_07}.
For the converse part, we first extend the ideas in \cite{PhilosofZamir_09} to derive matching outer bounds for the binary expansion model and then use the same technique to prove bounds in the Gaussian scenario.



\subsection{Notations}
Notations used in this paper are summarized below:
\begin{itemize}
\item
Throughout the paper, the block coding length is denoted by $N$. A sequence of random variables $x[1],\ldots, x[N]$ is denoted by $x^N$ and boldface $\mb{x}$ interchangeably. 
\item
Logarithms are of base $2$ if not specified. We use short-hand notations 
$\lp\cdot\rp^+$ to denote $\max\lbp0, \cdot\rbp$ and $\log^+\lp\cdot\rp$ to denote $\lp\log\lp\cdot\rp\rp^+$. 
\item
We use the short-hand notation $[k_1:k_2]$ to denote a set/tuple $\lp k_1, \ldots, k_2\rp$ and $v_{[k_1:k_2]}$ to denote $\lp v_{k_1}, \ldots, v_{k_2}\rp$ if $k_1\le k_2$, respectively. If $k_1 > k_2$, $[k_1:k_2]$ and $v_{[k_1:k_2]}$ denote the empty set $\phi$.
\item
Similarly, for a set of indices $S$, we use $v_{S}$ to denote the collection $\lbp v_i|\ i\in S\rbp$.
\end{itemize}

\subsection{Paper Organization}
The rest of this paper is organized as follows. In Section~\ref{sec_LDC}, we first introduce and formalize the binary expansion model, which serves as an auxiliary channel for the original one. Then we characterize the capacity region of the auxiliary channel and draw important intuitions for solving the original problem. In Section~\ref{sec_Lattice}, we propose the layered modulo-lattice scheme and derive its achievable rates. Then we show that the achievable rate region is within a constant gap to the proposed outer bounds in Section~\ref{sec_ConstGap}. Finally, we conclude the paper in Section~\ref{sec_Conclude}.

\section{A Binary Expansion Model for Gaussian MAC with Additive Interferences}\label{sec_LDC}
To approach the distributed interference cancellation problem in Gaussian multiple access channels (MAC), we first study a binary expansion model of the original problem. Solutions to the original problem can be inferred by solving the auxiliary problem in this model. The model is a natural generalization of the linear deterministic model proposed in \cite{AvestimehrDiggavi_07}, with random states acting as additive interferences. We formally define the model as follows.

\begin{definition}[Binary Expansion Model]\label{def_LDC}
The binary expansion MAC with additive interferences known to transmitters, corresponding to the original Gaussian problem, is defined by nonnegative integers 
\begin{align}
n_i := \left\lfloor \frac{1}{2}\log^+ \SNR_i \right\rfloor,\ m_i := \left\lfloor \frac{1}{2}\log^+ \INR_i \right\rfloor,\ i=1,\ldots,K,
\end{align}
transmitted signals $x_{b,i} \in \mbb{F}_2^q$, interferences $s_{b,i} \in \mbb{F}_2^q$ for $i\in[1:K]$, and received signal
\begin{align}
y_b = \sum_{i=1}^K  A^{q-n_i}x_{b,i} + \sum_{i=1}^K A^{q-m_i}s_{b,i},
\end{align}
where additions are modulo-two component-wise, $q=\max\lbp n_i,m_i: i\in[1:K]\rbp$, and $A\in\mathbb{F}_2^{q\times q}$ is the shift matrix
\begin{align}
A = \lb \begin{array}{ccccc} 0 & 0 & 0 & \cdots & 0\\ 1 & 0 & 0 & \cdots & 0\\ 0 & 1 & 0 & \cdots & 0\\ \vdots & & & \ddots & \vdots\\ 0 & \cdots & 0 & 1 & 0\end{array}\rb.
\end{align}
Each interference $s_{b,i}$ consists of $q$ i.i.d. $\mathrm{Bernoulli}\lp\frac{1}{2}\rp$ bits and is known to transmitter $i$, for $i\in[1:K]$.
\end{definition}

Here we use subscript $b$ to draw distinction from the original channel model. 
Note that the condition $P_1\ge P_2 \ge \ldots \ge P_K$ implies $n_1\ge n_2 \ge \ldots \ge n_K$.

An example is depicted in Fig.~\ref{fig_LDC}, where $n_1=4,n_2=2,m_1=5,m_2=3$.
\begin{figure}[htbp]
{\center
\includegraphics[width=3in]{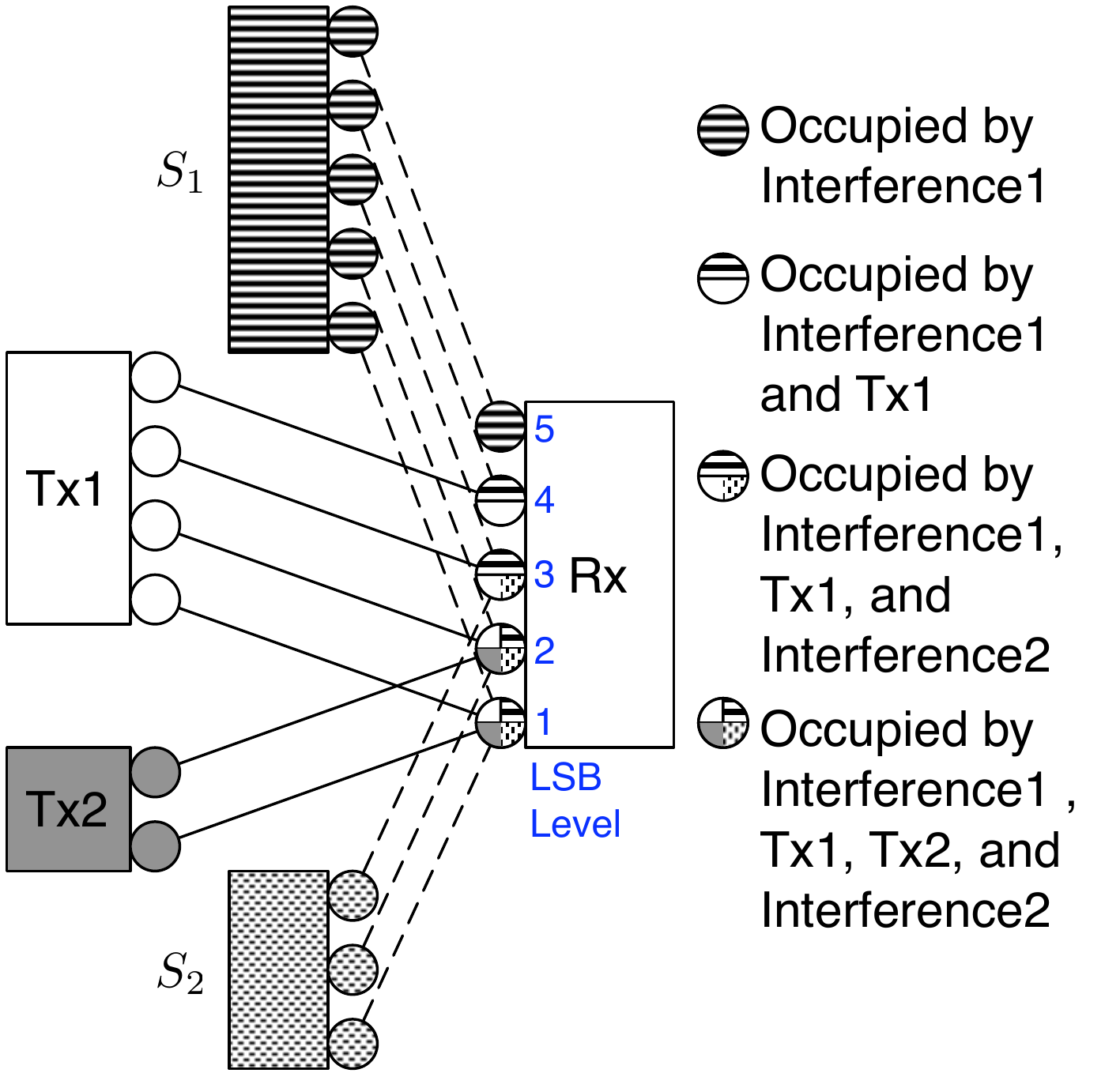}
\caption{The Binary Expansion Model. The numbering in blue denotes the ordering of the least significant bit (LSB) levels.}
\label{fig_LDC}
}
\end{figure}

The main result in this section is the characterization of capacity region of the auxiliary channel, summarized in the following theorem and two lemmas. To distinguish notations from the original Gaussian problem, lower-cases letters are used to represent rates in the binary expansion model. 

\begin{lemma}[Outer Bounds]\label{lem_LDCconverse} 
If $r_{[1:K]}\ge 0$ is achievable, it satisfies the following: for all $k \in [1:K]$,
\begin{align}
\sum_{i=k}^K r_i \le \ol{\msf{r}}_k\lp n_{[k:K]}, m_{[k+1:K]};K\rp,
\end{align}
where 
\begin{align}
\ol{\msf{r}}_k\lp n_{[k:K]}, m_{[k+1:K]};K\rp := \max\lbp m_{[k+1:K]},n_k\rbp - \sum_{i=k+1}^{K}\lp m_i - \max\lbp m_{[i+1:K]}, n_i\rbp\rp^+.
\end{align}
\end{lemma}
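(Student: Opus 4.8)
The plan is to establish these outer bounds by a genie-aided argument combined with the entropy/rate bounds natural to the binary expansion model. For each fixed $k$, I want to bound the sum rate $\sum_{i=k}^K r_i$ of the "bottom'' users $k, k+1,\ldots,K$. The key idea, extending \cite{PhilosofZamir_09}, is that these users collectively must combat the interferences $s_{[k+1:K]}$ which are known \emph{only} to the individual transmitters below $k$, so from the perspective of decoding messages $w_{[k:K]}$ at the receiver, those interferences act as unknown randomness that consumes signal levels. I would give the receiver (and a cooperating genie) the messages $w_{[1:k-1]}$ of the stronger users, so that their transmitted signals $x_{[1:k-1]}$ can be subtracted out; this is why the bound involves only $n_{[k:K]}$ and $m_{[k+1:K]}$ and not the parameters of users $1,\ldots,k-1$.

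First I would set up the Fano-type inequality: reliable decoding of $w_{[k:K]}$ forces $N\sum_{i=k}^K r_i \le I\lp \mb{w}_{[k:K]}; \mb{y}^N \mid \mb{w}_{[1:k-1]}\rp + N\epsilon_N$. Then I would bound this mutual information by $H\lp \mb{y}^N \mid \mb{w}_{[1:k-1]}\rp$ restricted to the relevant signal levels. The crucial combinatorial step is to track, level by level in the shift-matrix picture of Fig.~\ref{fig_LDC}, which bit positions of $y_b$ carry decodable information about $w_{[k:K]}$ and which are corrupted by the higher-layer interferences $s_{[k+1:K]}$ that no informed transmitter among $\{k,\ldots,K\}$ can pre-cancel. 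The positive term $\max\lbp m_{[k+1:K]},n_k\rbp$ counts the total number of available levels (the top of the interference stack or the signal level $n_k$, whichever reaches higher), while each subtracted term $\lp m_i - \max\lbp m_{[i+1:K]}, n_i\rbp\rp^+$ accounts for the levels occupied by interference $s_i$ that stick out above everything relevant beneath it and thus cannot be resolved.

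The main obstacle I expect is the bookkeeping of the telescoping subtraction: showing rigorously that the levels ``wasted'' by the known-only-to-one interferences combine into exactly $\sum_{i=k+1}^{K}\lp m_i - \max\lbp m_{[i+1:K]}, n_i\rbp\rp^+$ rather than some looser bound. This requires a careful ordering argument — processing interferences $s_{k+1},\ldots,s_K$ from the highest level downward and, at each stage, invoking independence and uniformity of the Bernoulli$(1/2)$ interference bits to argue that any portion of $s_i$ protruding above $\max\lbp m_{[i+1:K]}, n_i\rbp$ injects a full bit of entropy per level that cannot be matched by any transmit signal available to the receiver-plus-genie. I would formalize this by conditioning sequentially and using $H\lp A^{q-m_i}\mb{s}_i \mid \text{lower levels}\rp$ equal to the number of protruding levels times $N$, since distinct interferences are mutually independent and each consists of i.i.d.\ fair bits.

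To finish, I would assemble the single-letter bound from the per-level accounting, verify the $\max$ and $(\cdot)^+$ operations match the definition of $\ol{\msf{r}}_k$, and check the boundary cases $k=K$ (where the sum reduces to $r_K \le n_K$, since there is no higher interference to subtract) and $k=1$ (where no genie information is needed). The nonnegativity truncation $(\cdot)^+$ in each subtracted term handles the situation where interference $s_i$ is already submerged below the signal or the interferences beneath it, contributing nothing to the loss.
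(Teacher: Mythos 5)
Your high-level setup matches the paper's: Fano's inequality conditioned on $w_{[1:k-1]}$, and an accounting in which the positive term is the reachable span $\max\{m_{[k+1:K]},n_k\}$ and each subtracted term is the protrusion of $s_i$ above $\max\{m_{[i+1:K]},n_i\}$. But two concrete gaps remain. First, giving the genie only the messages $w_{[1:k-1]}$ is not enough to strip users $1,\ldots,k-1$ out of the received signal: their transmit signals are functions of both $w_j$ and $s_j$, so you must also condition on $s^N_{[1:k-1]}$ (this is legitimate since $s^N_{[1:k-1]}$ is independent of $w_{[k:K]}$, whence $I(w_{[k:K]};y^N\mid w_{[1:k-1]})\le I(w_{[k:K]};y^N\mid w_{[1:k-1]},s^N_{[1:k-1]})$); only then does the problem reduce to the partial output $y_{b,k}=\sum_{i\ge k}(x_i+s_i)$. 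Second, and more importantly, the step you describe as ``bound this mutual information by $H(y^N\mid w_{[1:k-1]})$ restricted to the relevant signal levels'' is not a valid entropy manipulation on its own: dropping levels only decreases an entropy, so it cannot turn an upper bound on $H(y^N)$ into the subtracted terms. The paper's actual mechanism is the identity $I(w_{[k:K]};y_{b,k}^N)=I(w_{[k:K]},s_{[k:K]}^N;y_{b,k}^N)-I(s_{[k:K]}^N;y_{b,k}^N\mid w_{[k:K]})$, followed by a chain-rule expansion of $H(s_{[k:K]}^N\mid y_{b,k}^N,w_{[k:K]})$ in which each term is bounded by $\min\{H(s_i^N),\,H(x_i^N+\sum_{l>i}(x_l^N+s_l^N))\}$; the deficit $H(s_i^N)$ minus this minimum is exactly $N(m_i-\max\{m_{[i+1:K]},n_i\})^+$, and the $i=k$ term is absorbed into $H(y_{b,k}^N\mid s_k^N)$ to produce the leading $\max\{m_{[k+1:K]},n_k\}$. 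None of this decomposition appears in your plan, and it is the entire content of the proof.

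Your alternative, purely level-wise route can in fact be completed for the binary model: the protruding ranges of distinct $s_i$, $i\ge k$, are pairwise disjoint (since $\max\{m_{[i+1:K]},n_i\}\ge m_{i'}$ for every $i'>i$), and on those levels $y_{b,k}$ is an independent uniform jam, so one can show those levels contribute zero conditional mutual information about $w_{[k:K]}$ given the remaining levels. If you pursue this, state and prove that disjointness and the conditional-uniformity claim explicitly rather than appealing to ``bookkeeping.'' Be aware, however, that such a bit-level argument does not survive the passage to the Gaussian channel, whereas the paper's mutual-information decomposition is reused verbatim (with differential entropies and a Gaussian maximum-entropy bound) in the proof of Lemma~\ref{lem_GCconverse}; this is the main reason to prefer the paper's route.
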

\begin{proof}
The proof is detailed in Section~\ref{subsec_LDCconverse}.
\end{proof}

\begin{lemma}[Inner Bounds]\label{lem_LDCachieve}
If $r_{[1:K]}\ge 0$ satisfies the following: for all $k \in [1:K]$,
\begin{align}
\sum_{i=k}^K r_i \le \ul{\msf{r}}_k\lp n_{[k:K]}, m_{[k+1:K]};K\rp
\end{align}
it is achievable. Here
\begin{align}
\ul{\msf{r}}_k\lp n_{[k:K]}, m_{[k+1:K]};K\rp := \sum_{i=k}^K \lp n_i - \max\lbp m_{[i+1:K]}, n_{i+1}\rbp\rp^+.
\end{align}
\end{lemma}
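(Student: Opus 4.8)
The plan is to exhibit a direct, zero-error level-assignment scheme for integer rate vectors and then reach all rate vectors by time-sharing. First I would rewrite the target as a sum of per-user contributions. Counting bit-levels from the least significant, the structure of the shift matrix makes transmitter $i$ and interference $s_i$ occupy levels $1,\ldots,n_i$ and $1,\ldots,m_i$ respectively. Setting $\tau_i := \max\lbp m_{[i+1:K]}, n_{i+1}\rbp$ (with $n_{K+1}:=0$) and $d_i := \lp n_i-\tau_i\rp^+$, the bound reads $\ul{\msf r}_k = \sum_{i=k}^K d_i =: D_k$. To each user $i$ I would attach the band $B_i := \lp\tau_i, n_i\rb$ (empty when $\tau_i\ge n_i$), of size $|B_i| = d_i$. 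A few lines then establish three facts: the $B_i$ are pairwise disjoint (because $\tau_i\ge n_{i+1}$ forces $B_j\subseteq(0,n_{i+1}]\subseteq(0,\tau_i]$ for all $j>i$); user $i$ reaches every level of $B_j$ exactly when $j\ge i$; and --- the reason for the definition of $\tau_i$ --- the only interferences reaching $B_i$ are among $s_1,\ldots,s_i$, each known to a transmitter in $\lbp 1,\ldots,i\rbp$ that also reaches $B_i$.

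The core is a one-level distributed-cancellation step: on any single level $\ell$ of a band $B_j$, one message bit can be delivered to any chosen user $i\le j$. Every transmitter $l\le j$ whose $s_l$ reaches $\ell$ transmits $s_l[\ell]$ at that level to cancel it; the chosen user $i$ transmits its data bit XORed with $s_i[\ell]$ when $s_i$ is present, and the remaining reaching transmitters stay silent at $\ell$. Since all additions are modulo two and, crucially, no residual interference $s_{i+1},\ldots,s_K$ reaches $B_j$, the received symbol at $\ell$ collapses to the data bit. Because levels do not interact in this deterministic model, each transmitter's codeword is assembled by fixing these per-level roles independently, and the receiver, knowing the level-to-user assignment, reads every clean level directly, so decoding is error-free.

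It then remains to decide which integer rate vectors are realizable by awarding the $\sum_i d_i$ clean levels to users, a level of $B_j$ being assignable to any user $i\le j$. This is a transportation (bipartite $b$-matching) problem whose user access sets $\lbp B_i,\ldots,B_K\rbp$ are nested in $i$; consequently Hall's feasibility condition collapses to the suffix family $\sum_{i=k}^K r_i\le D_k$, $k\in[1:K]$, which is exactly the hypothesis. Hence every integer rate vector in the stated region is realized exactly. For general real rate vectors I would apply the integer construction to the $N$-fold extension with the floored demands $\lfloor Nr_i\rfloor$ --- these still obey the scaled suffix inequalities --- and let $N\to\infty$.

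I expect the main obstacle to be the bookkeeping that keeps the distributed cancellation simultaneously consistent across all levels: a single transmitter must act as data-carrier on some levels, interference-canceller on others, and be silent elsewhere, and one must verify that the uncancellable interferences $s_{i+1},\ldots,s_K$ genuinely never intrude on $B_i$ --- precisely what $\tau_i=\max\lbp m_{[i+1:K]},n_{i+1}\rbp$ is engineered to ensure --- while also checking that Hall's condition reduces to the suffix bounds and not to a strictly larger set of constraints.
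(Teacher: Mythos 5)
Your proposal is correct and is essentially the paper's own construction: per-level distributed cancellation of each $s_i$ by transmitter $i$, with the clean levels organized into the bands $\lp\max\lbp m_{[i+1:K]},n_{i+1}\rbp, n_i\rb$ shared by users $1,\ldots,i$, yielding the per-band sum-rate $d_i$ and hence the suffix constraints. The only difference is cosmetic --- the paper eliminates the per-layer rates by Fourier--Motzkin, whereas you invoke a Hall/transportation argument plus time-sharing --- and your one indexing slip (the interferences that must not intrude on a level of $B_j$ are $s_{j+1},\ldots,s_K$, not $s_{i+1},\ldots,s_K$) does not affect the argument.
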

\begin{proof}
The proof is detailed in Section~\ref{subsec_LDCAchieve}.
\end{proof}

\begin{theorem}[Capacity of the Binary Expansion Model]\label{thm_LDC} 
$r_{[1:K]}\ge 0$ is achievable, if and only if it satisfies the following: for all $k \in [1:K]$,
\begin{align}
\sum_{i=k}^K r_i \le \msf{r}_k\lp n_{[k:K]}, m_{[k+1:K]};K\rp,
\end{align}
where $\msf{r}_k\lp n_{[k:K]}, m_{[k+1:K]};K\rp = \ol{\msf{r}}_k\lp n_{[k:K]}, m_{[k+1:K]};K\rp = \ul{\msf{r}}_k\lp n_{[k:K]}, m_{[k+1:K]};K\rp$.
\end{theorem}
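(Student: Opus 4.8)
The plan is to obtain the theorem by simply combining Lemma~\ref{lem_LDCconverse} and Lemma~\ref{lem_LDCachieve}: the converse lemma shows that every achievable $r_{[1:K]}$ obeys $\sum_{i=k}^K r_i \le \ol{\msf{r}}_k$, while the achievability lemma shows that every $r_{[1:K]}$ obeying $\sum_{i=k}^K r_i \le \ul{\msf{r}}_k$ is achievable. Hence the only thing left to prove is the purely deterministic identity $\ol{\msf{r}}_k\lp n_{[k:K]}, m_{[k+1:K]};K\rp = \ul{\msf{r}}_k\lp n_{[k:K]}, m_{[k+1:K]};K\rp$ for every $k\in[1:K]$; once this holds, the inner and outer polytopes coincide and the capacity region, being sandwiched between them, equals the claimed region.

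To set up the algebra I would introduce the shorthand $M_i := \max\lbp m_{[i+1:K]}, n_i\rbp$ and $L_i := \max\lbp m_{[i+1:K]}, n_{i+1}\rbp$, adopting the conventions that an empty index set contributes nothing and $n_{K+1}:=0$. With this notation $\ol{\msf{r}}_k = M_k - \sum_{i=k+1}^K \lp m_i - M_i\rp^+$ and $\ul{\msf{r}}_k = \sum_{i=k}^K \lp n_i - L_i\rp^+$. Two elementary observations drive everything: first, $L_i = \max\lbp m_{i+1}, M_{i+1}\rbp$ directly from the definitions; second, and crucially, the ordering hypothesis $n_1\ge n_2\ge\cdots\ge n_K$ forces $M_i = \max\lbp n_i, L_i\rbp$, because $n_i\ge n_{i+1}$ lets one absorb the $n_{i+1}$ sitting inside $L_i$.

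The core step is a backward induction on $k$ from $K$ down to $1$. The base case is immediate: $\ol{\msf{r}}_K = n_K = \ul{\msf{r}}_K$. Since the inner bound telescopes trivially as $\ul{\msf{r}}_k = \lp n_k - L_k\rp^+ + \ul{\msf{r}}_{k+1}$, it suffices to establish the matching recursion $\ol{\msf{r}}_k - \ol{\msf{r}}_{k+1} = \lp n_k - L_k\rp^+$ for the outer bound. Expanding the difference gives $\ol{\msf{r}}_k - \ol{\msf{r}}_{k+1} = M_k - M_{k+1} - \lp m_{k+1} - M_{k+1}\rp^+$, and substituting $L_k = \max\lbp m_{k+1}, M_{k+1}\rbp$ and $M_k = \max\lbp n_k, L_k\rbp$ reduces the claim to a short case split on whether $m_{k+1}$ exceeds $M_{k+1}$ and whether $n_k$ exceeds $L_k$; in each of the four resulting cases both sides collapse to the same value.

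The only real subtlety, and the place I would be most careful, is the bookkeeping of the nested maxima and the $\lp\cdot\rp^+$ operators, together with the observation that the monotonicity $n_1\ge\cdots\ge n_K$ is precisely what makes $M_i = \max\lbp n_i, L_i\rbp$ hold and hence what makes the telescoping valid; dropping that hypothesis breaks the equality. Beyond isolating this recursion, I expect the case analysis to be routine, so the main obstacle is purely organizational rather than conceptual.
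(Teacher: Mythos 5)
Your proposal is correct and follows essentially the same route as the paper: the theorem is reduced to the deterministic identity $\ol{\msf{r}}_k = \ul{\msf{r}}_k$, which is proved by backward induction with base case $k=K$ and the telescoping step $\ol{\msf{r}}_k - \ol{\msf{r}}_{k+1} = \lp n_k - \max\lbp m_{[k+1:K]}, n_{k+1}\rbp\rp^+$, where the paper likewise invokes the ordering $n_k \ge n_{k+1}$ at exactly the point you identify. The only cosmetic difference is that you package the maxima as $M_i, L_i$ and finish with a case split, while the paper chains the identities $\max\lbp b\rbp + \lp c-\max\lbp b\rbp\rp^+ = \max\lbp b,c\rbp$ and $\max\lbp a,b\rbp - a = \lp b-a\rp^+$ directly.
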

\begin{proof}
To show $\ol{\msf{r}}_k\lp n_{[k:K]}, m_{[k+1:K]};K\rp = \ul{\msf{r}}_k\lp n_{[k:K]}, m_{[k+1:K]};K\rp$ for all $k\in[1:K]$, we shall use induction backwards.

{\flushleft 1) $k=K$}: $\ol{\msf{r}}_K\lp n_K;K\rp = n_K = \ul{\msf{r}}_K\lp n_K;K\rp$.
{\flushleft 2)} Suppose the claim is correct for $k=l$. For $k=l-1$,
\begin{align}
&\ol{\msf{r}}_{l-1}\lp n_{[l-1:K]}, m_{[l:K]};K\rp - \ol{\msf{r}}_l\lp n_{[l:K]}, m_{[l+1:K]};K\rp\\
&= \max\lbp m_{[l:K]},n_{l-1}\rbp - \max\lbp m_{[l+1:K]},n_{l}\rbp - \lp m_l - \max\lbp m_{[l+1:K]}, n_l\rbp\rp^+\\
&= \max\lbp m_{[l:K]},n_{l-1}\rbp - \max\lbp m_{[l+1:K]},n_{l},m_l\rbp\\
&= \max\lbp m_{[l:K]},n_l,n_{l-1}\rbp - \max\lbp m_{[l:K]},n_{l}\rbp\\
&= \max\lbp \max\lbp m_{[l:K]},n_l\rbp,n_{l-1}\rbp - \max\lbp m_{[l:K]},n_{l}\rbp\\
&= \lp n_{l-1} - \max\lbp m_{[l:K]}, n_{l}\rbp\rp^+ = \ul{\msf{r}}_{l-1}\lp n_{[l-1:K]}, m_{[l:K]};K\rp - \ul{\msf{r}}_l\lp n_{[l:K]}, m_{[l+1:K]};K\rp.
\end{align}
Hence, $\ol{\msf{r}}_{l-1}\lp n_{[l-1:K]}, m_{[l:K]};K\rp = \ul{\msf{r}}_{l-1}\lp n_{[l-1:K]}, m_{[l:K]};K\rp$. By induction principle, the proof is complete.
\end{proof}

\subsection{Motivating Examples}
Before we formally prove the converse and the achievability, we first give a couple of examples to illustrate the high-level intuition behind the result. Such intuitions not only work for the binary expansion model, but also carry over to the original Gaussian setting. For simplicity, all examples are two-user ($K=2$), with fixed $(n_1,n_2) = (4,2)$ and various $(m_1,m_2)$. They are depicted in Fig.~\ref{fig_Examples}. Although the total number of bit levels of $y_{b}$ is $q=\max\{ n_1, n_2, m_1,m_2\}$, referring to Fig.~\ref{fig_LDC} only the first $\max\{ n_1,n_2\}=4$ least significant bit (LSB) levels are those can be potentially used for communicating information, since none of the transmitters can access the upper bit levels owing to power constraints. Therefore, with the side information of interferences at transmitters, they try to cancel interferences in these $4$ levels as much as possible.

\begin{figure}[htbp]
{\center
\subfigure[$m_1\le n_1, m_2\le n_2$]{\includegraphics[width=2in]{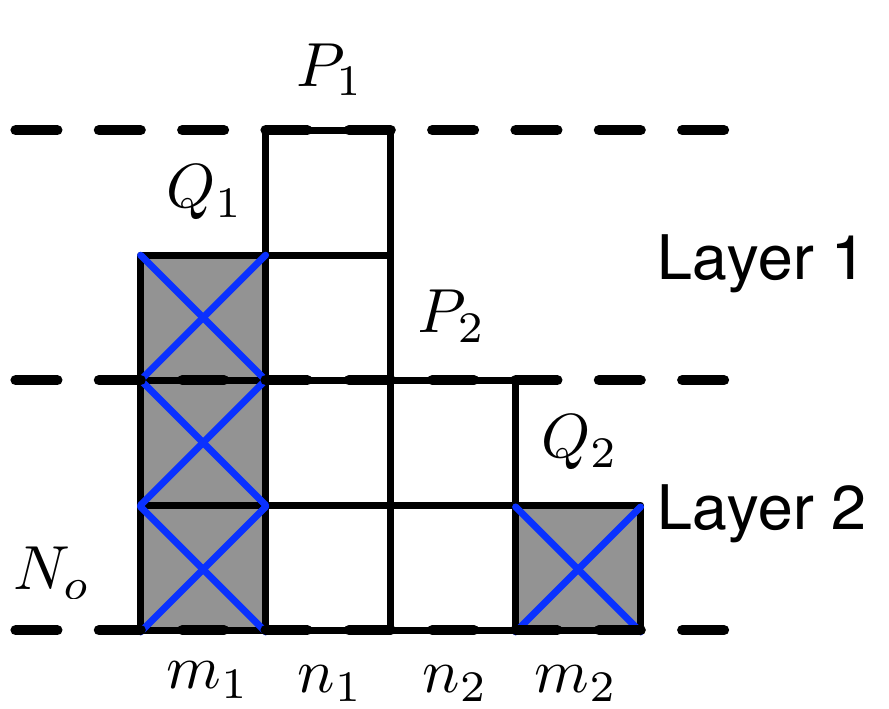}}
\subfigure[$m_1\ge n_1, m_2\le n_2$]{\includegraphics[width=2in]{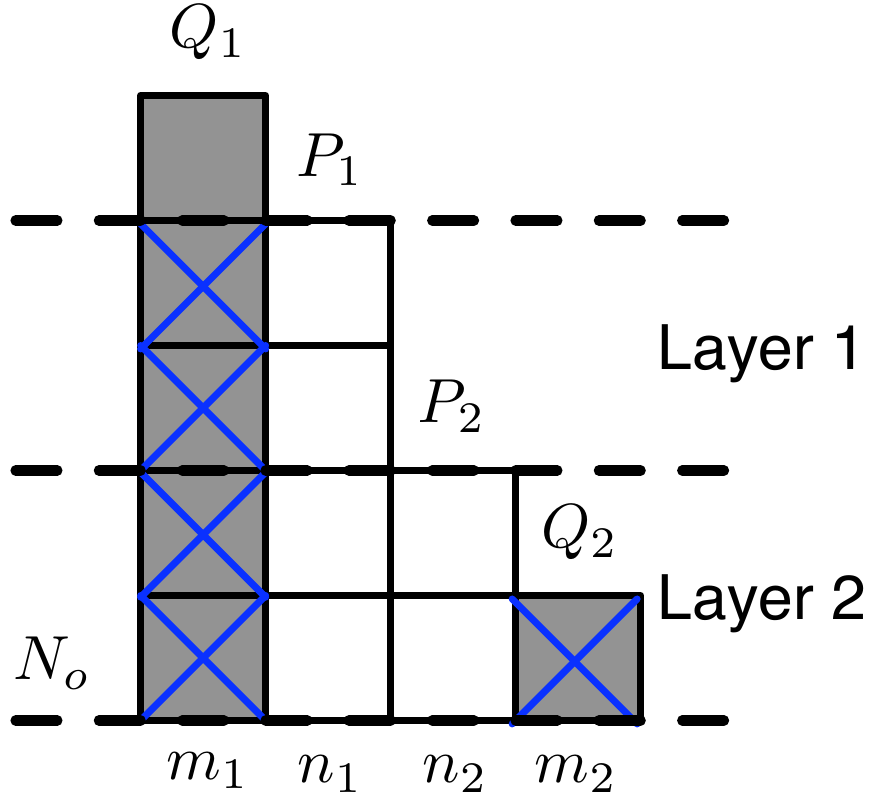}}
\subfigure[$n_2 \le m_2\le n_1$]{\includegraphics[width=2in]{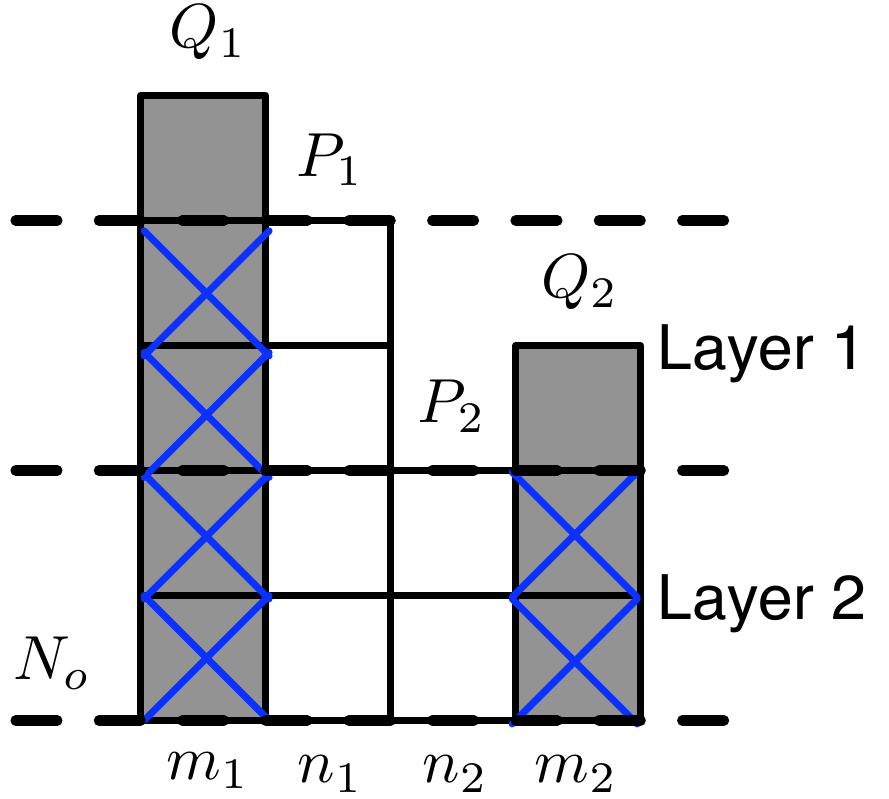}}
\subfigure[$m_2\ge n_1$]{\includegraphics[width=2in]{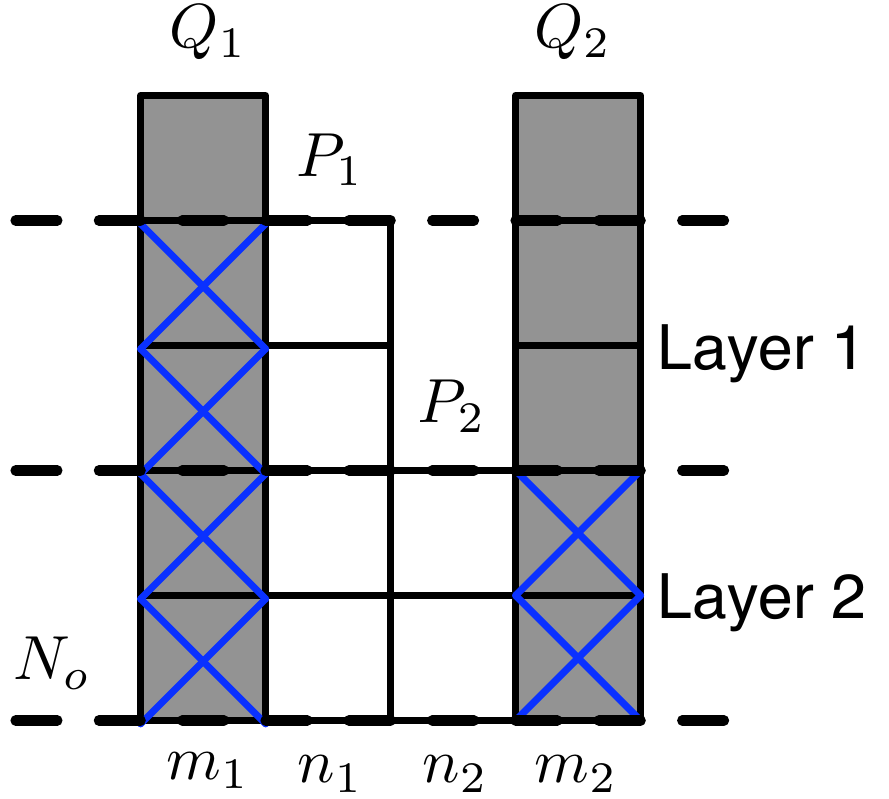}}
\subfigure[Capacity Region Degradation]{\includegraphics[width=4in]{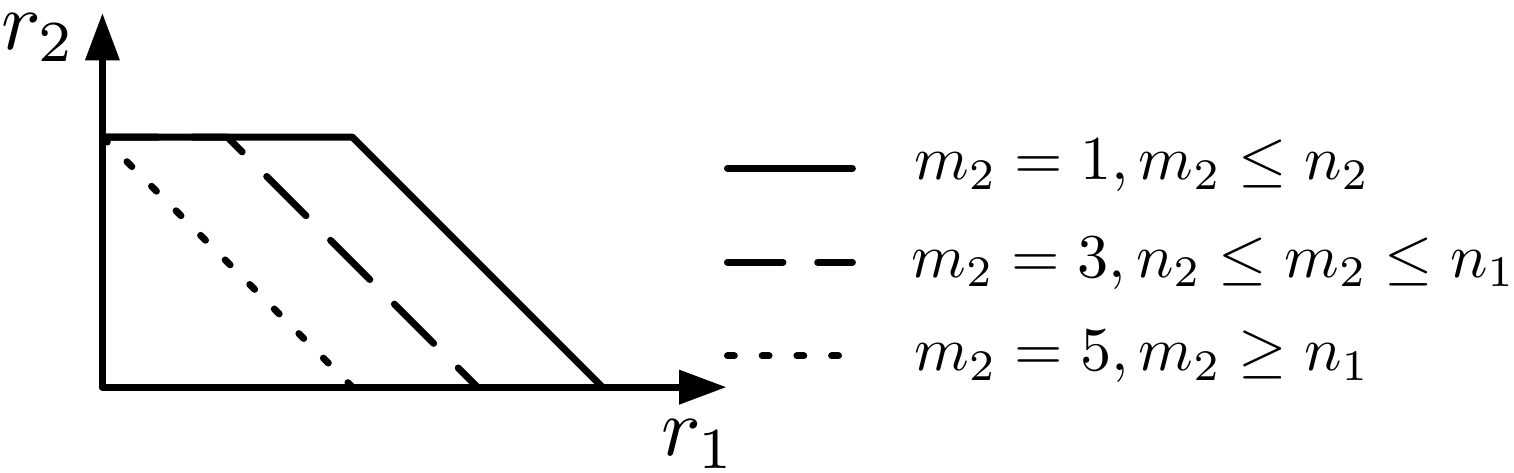}}
\caption{Examples. Note that by definition, $n_i \leftrightarrow P_i$, $m_i \leftrightarrow Q_i$, and the bottom line corresponds to the noise level. Blue crosses denote that interference bits (shaded) can be cancelled.}
\label{fig_Examples}
}
\end{figure}

The first example (Fig.~\ref{fig_Examples}(a)) illustrates the situation where $m_1\le n_1$ and $m_2\le n_2$. Transmitter 1 can completely cancel interference $s_{b,1}$ since it only occupies $m_1=3$ LSB levels of $y_{b}$. Transmitter 2 can also cancel interference $s_{b,2}$ completely since it only occupies $m_2=1$ LSB level of $y_{b}$. Therefore, all bit levels are free from interference, and the capacity region is $r_1+r_2\le 4, r_2\le 2$ which is the same as the clean MAC.

The second example (Fig.~\ref{fig_Examples}(b)) illustrates the situation where $m_1\ge n_1$ and $m_2\le n_2$. Transmitter 1 cannot completely cancel interference $s_{b,1}$ since it occupies $m_1=5$ LSB levels of $y_{b}$, while transmitter 1 has access to only $n_1=4$ LSB levels. However, it can cancel those in the first $4$ LSB levels. Transmitter 2 can again cancel interference $s_{b,2}$ completely. Therefore, all $4$ LSB levels are free from interference, and the capacity region is $r_1+r_2\le 4, r_2\le 2$ which is again the same as the clean MAC.

From the above examples, we see that the strength of interference $s_{b,1}$ does not effect the capacity region, since the only bit levels that matter are the $\max\{n_1,n_2\} = n_1 = 4$ LSB levels, and transmitter 1 can always ``clean up" the interference caused by $s_{b,1}$ in these bit levels. On the other hand, the strength of interference $s_{b,2}$ \emph{does} affect the capacity region, as discussed below.

The third example (Fig.~\ref{fig_Examples}(c)) illustrates the situation where $n_2\le m_2\le n_1$. Since transmitter 2 only has access to $n_2=2$ LSB levels, it cannot cancel the interference caused by $s_{b,2}$ at the third LSB level. Therefore, the level is no longer useful and cannot be used by transmitter 1. The fourth LSB level, however, is clean after transmitter 1's interference cancellation. The capacity region becomes $r_1+r_2 \le 3, r_2\le 2$. 

The last example (Fig.~\ref{fig_Examples}(d)) illustrates the situation where $m_2\ge n_1$. Again transmitter 2 cannot do anything about $s_{b,2}$ except at the $2$ LSB levels. Therefore, the third and the fourth bit levels are both corrupted and cannot be used. The capacity region becomes $r_1+r_2 \le 2$. 

Fig.~\ref{fig_Examples}(e) depicts the degradation of the capacity regions due to various strengths of $s_{b,2}$. From the above discussions, we make the following observations.
\begin{itemize}
\item [1)] The strength of the interference that is known to the strongest transmitter, that is, $s_{b,1}$, does not affect the capacity region, as in the single-user point-to-point case.
\item [2)] Based on the interference cancellation capability of each transmitter (its transmit power), the bit levels of $y_{b}$ can be partitioned into $K$ layers (here $K=2$): layer 1, consisting of the third and the fourth LSB levels, and layer 2, consisting of the first and second levels. In the bottom layer $2$, both interferences caused by $s_{b,1}$ and $s_{b,2}$ can be completely cancelled. In this layer both users share $n_2$ bit levels. On the other hand, in the top layer $1$, only the interference caused by $s_{b,1}$ can be cancelled, while that caused by $s_{b,2}$ cannot. Hence in this layer user 1 can only use $\lp n_1-m_2\rp^+$ levels.
\end{itemize}

These observations lead to a natural way for establishing achievability, which is detailed in Section~\ref{subsec_LDCAchieve}. For the converse, the above discussion gives the intuitive explanation why the lack of knowledge about $s_{b,2}$ at transmitter 1 degrades the capacity region. In Section~\ref{subsec_LDCconverse} we give a formal converse proof.

\subsection{Achievability}\label{subsec_LDCAchieve}
Each transmitter, say $i$, cancels the interference it knows, $s_{b,i}$, as much as it can. If $m_i\le n_i$, then $s_{b,i}$ can be completely canceled. If $m_i > n_i$, then the top most $\lp m_i-n_i\rp$ levels of $s_{b,i}$ cannot be removed, and the bit levels of $y_b$ occupied by this chunk can never be used to convey data by any user. Since the channel is linear and the interferences are additive, the effect of interference cancelation remains for other users.

Superimposed upon interference cancellation, the scheme consists of $K$ layers. Layer $i$ is from the $\lp n_{i+1}+1\rp$-th level of LSB to the $n_i$-th level at the receiver, $i\in[1:K]$. In layer $i$, user $[1:i]$ can transmit. Therefore, we have the following achievable rates in layer $i$, $i\in[1:K]$: $r^{(i)}_{[1:i]} \ge 0$ satisfying
\begin{align}
\sum_{l=1}^{i} r^{(i)}_l \le \lp n_i - \max\lbp m_{[i+1:K]}, n_{i+1}\rbp\rp^+.
\end{align}

User $i$'s rate is the aggregate of its rates from layer $i$ to layer $K$: $r_i = \sum_{l=i}^K r^{(l)}_i$. Apply Fourier-Motzkin elimination we establish Lemma \ref{lem_LDCachieve}.

\subsection{Converse Proof}\label{subsec_LDCconverse}
Next we prove the outer bounds in Lemma \ref{lem_LDCconverse}. 
\begin{proof}
Let 
\begin{align}
y_{b,k} := \sum_{i=k}^K \underbrace{A^{q-n_i}x_{b,i}}_{x_i} + \sum_{i=k}^K \underbrace{A^{q-m_i}s_{b,i}}_{s_i}.
\end{align}
Here we use $x_i$ to denote $A^{q-n_i}x_{b,i}$ and $s_i$ to denote $A^{q-m_i}s_{b,i}$ for notational convenience. It is easy to distinguish these notations from those in the original Gaussian model based on the context.

If $r_{[1:K]}$ is achievable, for any $k \in [1:K]$ by Fano's inequality and data processing inequality, we have
\begin{align}
&N\lp \sum_{i=k}^K r_i -\epsilon_N\rp\\ 
&\le I\lp w_{[k:K]}; y_b^N | w_{[1: k-1]}\rp\\
&\overset{\aaaa}{\le} I\lp w_{[k:K]}; y_b^N | w_{[1: k-1]} , s^N_{[1:k-1]}\rp\\
&= H\lp y_b^N | w_{[1: k-1]} , s^N_{[1:k-1]}\rp - H\lp y_b^N | w_{[1: K]} , s^N_{[1:k-1]}\rp\\
&= H\lp y_{b,k}^N | w_{[1: k-1]} , s^N_{[1:k-1]}\rp - H\lp y_{b,k}^N | w_{[1: K]} , s^N_{[1:k-1]}\rp\\
&\overset{\bbbb}{=} I\lp w_{[k:K]}; y_{b,k}^N\rp = I\lp w_{[k:K]}, s_{[k:K]}^N; y_{b,k}^N\rp - I\lp s_{[k:K]}^N; y_{b,k}^N| w_{[k:K]}\rp\\
&\overset{\cccc}{=} H\lp y_{b,k}^N\rp - \sum_{i=k}^K H\lp s_{i}^N\rp + H\lp s_{[k:K]}^N | y_{b,k}^N, w_{[k:K]}\rp\\
&= H\lp y_{b,k}^N\rp - \sum_{i=k}^K H\lp s_i^N\rp + \sum_{i=k}^K H\lp s_i^N | y_{b,k}^N, w_{[k:K]}, s_{[k:i-1]}^N\rp\\
&\overset{\dddd}{\le} H\lp y_{b,k}^N\rp - H\lp s_k^N\rp + H\lp s_k^N|y_{b,k}^N\rp - \sum_{i=k+1}^{K} H\lp s_i^N\rp + \sum_{i=k+1}^{K} H\lp s_i^N | y_{b,i}^N, w_{[i:K]}\rp\\
&\overset{\eeee}{\le} H\lp y_{b,k}^N| s_k^N\rp - \sum_{i=k+1}^{K} H\lp s_i^N\rp + \sum_{i=k+1}^{K} \min\lbp H\lp s_i^N \rp, H\lp x_i^N + \sum_{l=i+1}^{K} \lp x_l^N+s_l^N\rp \rp \rbp\\
&\le N\lbp \max\lbp m_{[k+1:K]},n_k\rbp -  \sum_{i=k+1}^{K} m_i +  \sum_{i=k+1}^{K} \min\lbp m_i,\max\lbp m_{[i+1:K]}, n_i\rbp \rbp \rbp \label{eq_upterm},
\end{align}
where $\epsilon_N\rightarrow 0$ as $N\rightarrow \infty$. (a) is due to the facts that conditioning reduces entropy and that $s^N_{[1:k-1]}$ is independent of $w_{[k:K]}$. (b) is due to the fact that $\lp w_{[k:K]}, s_{[k:K]}^N,y_{b,k}^N\rp$ and $\lp w_{[1:k-1]}, s_{[1:k-1]}^N\rp$ are independent. (c) is due to the fact that $\lbp w_{[k:K]},s_{[k:K]}^N \rbp$ are mutually independent and $y_{b,k}^N$ is a function of $\lbp w_{[k:K]},s_{[k:K]}^N \rbp$. (d) is due to conditioning reduces entropy and the fact that $\lp w_{[i:K]}, s_{[i:K]}^N,y_{b,i}^N\rp$ and $\lp w_{[k:i-1]}, s_{[k:i-1]}^N\rp$ are independent. (e) is due to the fact that $y_{b,i}^N = x_i^N + s_i^N + \sum_{l=i+1}^{K} \lp x_l^N+s_l^N\rp$.

It is straightforward to see that $\eqref{eq_upterm} = N \ol{\msf{r}}_k\lp n_{[k:K]}, m_{[k+1:K]};K\rp$. Proof complete.
\end{proof}

\subsection{Implication on the Gaussian Problem}
By investigating the binary expansion model, we gain intuitions about how to solve the original Gaussian problem. For the outer bounds, we will mimic the proof in Section \ref{subsec_LDCconverse}. For the achievability in the binary expansion model, interference cancellation is realized by simply subtracting interferences from the transmit signals. Due to linearity of the channel and the fact that there is no interaction among different bit levels, if an interference, say, a component of $s_{b,1}$, is cancelled by transmitter 1, it will remain cancelled for other users as well. To realize such distributed interference cancellation in the Gaussian scenario, however, Philosof \etal \cite{PhilosofZamir_07} show that Gelfand-Pinsker scheme based on Gaussian random binning is not sufficient. Instead, they propose a modulo-lattice scheme which can carry out this task. Motivated by the layered nature in the achievability of the binary expansion model, we propose a \emph{layered} modulo-lattice scheme, generalized from the single-layer scheme in \cite{PhilosofZamir_07}, to realize distributed interference cancellation in all layers, and show that it achieves the capacity region to within a constant number of bits.

\section{Layered Modulo-Lattice Scheme}\label{sec_Lattice}
In this section we first give a brief review on lattices and propose the modulo-lattice scheme used in each layer of our layered architecture. Then we connect all layers, describe the overall architecture, and derive the achievable rates in all layers.

\subsection{A Primer on Lattices}
Before introducing the modulo-lattice scheme, first we give some basic definitions and facts about lattices. For more detailed introduction, please refer to \cite{PhilosofZamir_07} and the references therein. For completeness, the following basic and useful facts adapted from \cite{PhilosofZamir_07} are introduced.

An $N$-dimensional lattice $\Lambda$ is defined as
\begin{align}
\Lambda := \lbp \mb{l}=B\mb{i}: \mb{i}\in\mathbb{Z}^N\rbp,
\end{align}
where $B\in \mathbb{R}^{N\times N}$ is non-singular.
By definition, the origin $\mb{0}\in\Lambda$.

A natural procedure associated to lattice $\Lambda$ is to quantize points in $\mathbb{R}^N$ to the nearest lattice point.
The nearest neighbor quantizer associated with lattice $\Lambda$ is defined as
\begin{align}
Q_{\Lambda}\lp\mb{x}\rp := \arg\min_{\mb{l}\in\Lambda} \| \mb{x} - \mb{l}\|,\ \forall \mb{x}\in\mathbb{R}^N.
\end{align}
Here $\|\cdot\|$ denote the Euclidean norm.

Another natural procedure is to take the modulo on a lattice.
For any $\mb{x}\in\mathbb{R}^N$, its modulo on lattice $\Lambda$ is the ``quantization error"
\begin{align}
\mb{x} \bmod \Lambda := \mb{x}-Q_{\Lambda}\lp\mb{x}\rp.
\end{align}
Note that the modulo-lattice operation satisfies the distributive property: for any $\mb{x},\mb{y}\in\mbb{R}^N$,
\begin{align}
\lb\lp\mb{x} \bmod \Lambda\rp + \mb{y}\rb \bmod \Lambda = \lb \mb{x} + \mb{y}\rb \bmod \Lambda.
\end{align}

The basic Voronoi region of lattice $\Lambda$ is defined as
\begin{align}
\mcal{V} := \lbp \mb{x}\in \mathbb{R}^N: Q_{\Lambda}\lp\mb{x}\rp = \mb{0}\rbp.
\end{align}
We denote the volume of $\mcal{V}$ by $V$, $V=\int_{\mcal{V}} d\mb{x}$.


The second moment of the a lattice $\Lambda$ is defined by the second moment per dimension of a uniform distribution over the basic Voronoi region $\mcal{V}$:
\begin{align}
\sigma_\Lambda^2 := \frac{1}{N}\frac{\int_{\mcal{V}}\|\mb{x}\|^2d\mb{x}}{V}.
\end{align}
The normalized second moment is defined by 
\begin{align}
G\lp \Lambda\rp := \frac{\sigma_\Lambda^2}{V^{2/N}}.	
\end{align}
Note that the normalized second moment of a lattice is always lower bounded by $\frac{1}{2\pi e}$ \cite{ZamirFeder_96}.


The following lemmas \cite{ZamirFeder_96} turn out to be useful for computing achievable rates.
\begin{lemma}\label{lem_Entropy}
For a given $N$-dimensional lattice $\Lambda$ with basic Voronoi region $\mcal{V}$, if random vector $\mb{Y} \sim \mathrm{Unif}\lp \mcal{V}\rp$, then
\begin{align}
h\lp \mb{Y}\rp = \log\lp V\rp = \frac{N}{2}\log\lp \frac{\sigma_\Lambda^2}{G\lp \Lambda\rp}\rp.
\end{align}
\end{lemma}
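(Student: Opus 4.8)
The plan is to establish the two equalities in turn, both of which follow directly from definitions given earlier in the excerpt. The first equality, $h\lp\mb{Y}\rp = \log\lp V\rp$, is the standard fact that the differential entropy of a uniform distribution equals the log-volume of its support. To prove it, I would start by noting that since $\mb{Y} \sim \mathrm{Unif}\lp\mcal{V}\rp$, its probability density function is $f\lp\mb{y}\rp = 1/V$ for $\mb{y}\in\mcal{V}$ and $f\lp\mb{y}\rp = 0$ otherwise, where $V = \int_{\mcal{V}} d\mb{y}$ is the volume of the basic Voronoi region as defined above.

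Next I would substitute this density into the definition of differential entropy and evaluate the integral. Since the integrand $-f\lp\mb{y}\rp\log f\lp\mb{y}\rp$ is constant equal to $\lp1/V\rp\log V$ over $\mcal{V}$ and zero elsewhere, the integral reduces to $\lp1/V\rp\log\lp V\rp \cdot V = \log\lp V\rp$. Concretely,
\begin{align}
h\lp\mb{Y}\rp = -\int_{\mcal{V}} f\lp\mb{y}\rp \log f\lp\mb{y}\rp\, d\mb{y} = -\int_{\mcal{V}} \frac{1}{V}\log\frac{1}{V}\, d\mb{y} = \log\lp V\rp,
\end{align}
which gives the first equality.

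For the second equality I would simply invert the definition of the normalized second moment. By definition $G\lp\Lambda\rp = \sigma_\Lambda^2 / V^{2/N}$, so solving for $V^{2/N}$ yields $V^{2/N} = \sigma_\Lambda^2 / G\lp\Lambda\rp$. Raising both sides to the power $N/2$ gives $V = \lp\sigma_\Lambda^2 / G\lp\Lambda\rp\rp^{N/2}$, and taking the logarithm produces $\log\lp V\rp = \frac{N}{2}\log\lp\sigma_\Lambda^2/G\lp\Lambda\rp\rp$. Chaining this with the first equality completes the proof.

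Candidly, this lemma poses no real obstacle: it is purely a matter of correctly identifying the uniform density and performing a one-line algebraic manipulation of the definition of $G\lp\Lambda\rp$. The only point requiring any care is making sure the density normalization $1/V$ and the volume element are handled consistently so that the entropy integral collapses to $\log\lp V\rp$ cleanly; everything else is a direct substitution.
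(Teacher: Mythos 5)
Your proof is correct: the paper itself gives no proof of this lemma (it is quoted from the Zamir--Feder reference), and your two steps --- computing the differential entropy of a uniform density $1/V$ on $\mcal{V}$ to get $\log V$, then inverting the definition $G\lp\Lambda\rp = \sigma_\Lambda^2/V^{2/N}$ --- are exactly the standard argument the cited source relies on. Nothing is missing.
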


\begin{lemma}\label{lem_White}
Consider an $N$-dimensional lattice $\Lambda$ has the minimal normalized second moment. If random vector $\mb{Y} \sim \mathrm{Unif}\lp \mcal{V}\rp$, then its covariance matrix is white: $K_{\mb{Y}} = \sigma_{\Lambda}^2 I_{N}$. Moreover, there exists a sequence of such lattices $\lbp \Lambda_N, N\in\mathbb{N}\rbp$, that is \emph{good for quantization}, in the sense that they attain the lower bound $\frac{1}{2\pi e}$ as $N\ra\infty$:
\begin{align}
\lim_{N\ra\infty} G\lp \Lambda_N \rp = \frac{1}{2\pi e}. \label{eq_Limit}
\end{align}
\end{lemma}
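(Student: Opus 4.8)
The plan is to treat the two assertions separately: the whiteness of the covariance of any second-moment-optimal lattice, and the existence of a sequence attaining the lower bound $\frac{1}{2\pi e}$.

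For the whiteness, I would first note that $\mathcal{V}$ is symmetric about the origin (because $\Lambda = -\Lambda$), so $\E[\mb{Y}] = \mb{0}$, $K_{\mb{Y}} = \E[\mb{Y}\mb{Y}^T]$, and $\mathrm{tr}(K_{\mb{Y}}) = N\sigma_\Lambda^2$. Since $G$ is invariant under rotations, I may rotate coordinates so that $K_{\mb{Y}} = \mathrm{diag}(\lambda_1,\ldots,\lambda_N)$. The crucial step is a volume-preserving diagonal stretch $D = \mathrm{diag}(d_1,\ldots,d_N)$ with $\prod_i d_i = 1$: the image $D\mathcal{V}$ is a fundamental region of $D\Lambda$ (though not its Voronoi region), so it upper bounds the optimal second moment, $\sigma_{D\Lambda}^2 \le \frac{1}{N}\frac{\int_{D\mathcal{V}}\|\mb{x}\|^2\,d\mb{x}}{\mathrm{Vol}(D\mathcal{V})}$. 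The substitution $\mb{x}=D\mb{y}$ converts the right side into $\frac{1}{N}\sum_i d_i^2\lambda_i$ and leaves the volume equal to $V$, giving $G(D\Lambda) \le \frac{1}{NV^{2/N}}\sum_i d_i^2\lambda_i$. I would then optimize this bound over $D$ subject to $\prod_i d_i^2 = 1$. By the arithmetic-geometric mean inequality the minimum of $\sum_i d_i^2\lambda_i$ equals $N(\prod_i\lambda_i)^{1/N}$, so $G(D\Lambda) \le (\prod_i\lambda_i)^{1/N}/V^{2/N}$, whereas $G(\Lambda) = (\frac{1}{N}\sum_i\lambda_i)/V^{2/N}$. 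A second application of AM-GM shows $G(\Lambda) \ge G(D\Lambda)$, with equality if and only if all $\lambda_i$ coincide. Since $\Lambda$ minimizes $G$, strict inequality is impossible, forcing $\lambda_1 = \cdots = \lambda_N = \sigma_\Lambda^2$, i.e.\ $K_{\mb{Y}} = \sigma_\Lambda^2\Id_N$.

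For the second claim I would benchmark against the $N$-dimensional ball. A direct computation using $\E[\|\mb{X}\|^2] = \frac{N}{N+2}r^2$ and $V = \frac{\pi^{N/2}}{\Gamma(N/2+1)}r^N$ gives normalized second moment $\frac{\Gamma(N/2+1)^{2/N}}{\pi(N+2)}$, which by Stirling's formula tends to $\frac{1}{2\pi e}$. Since a ball is not a Voronoi region, the remaining task is to produce genuine lattices whose Voronoi cells become sphere-like. I would obtain these from a Minkowski-Hlawka / random-lattice averaging argument (equivalently, Construction A applied to a good code): averaging the quantization error over an ensemble of volume-normalized lattices shows the expected second moment exceeds that of the ball only by a term vanishing with $N$, so some lattice in each dimension meets the bound, and letting $N\ra\infty$ yields $G(\Lambda_N)\ra\frac{1}{2\pi e}$.

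The main obstacle is this second claim. The variational argument for whiteness is elementary and self-contained, but proving that lattice Voronoi regions can be made asymptotically as efficient as the ball is the deep part and depends on the existence of good lattice codes. I would therefore rely on the standard random-coding / Minkowski-Hlawka machinery as in \cite{ZamirFeder_96} rather than on an explicit construction, since directly controlling the Voronoi second moment of a concrete family is precisely the delicate point.
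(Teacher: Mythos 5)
The paper offers no proof of this lemma to compare against: it is imported verbatim from \cite{ZamirFeder_96}, so the only meaningful benchmark is the cited source. Measured against that, your argument for the first claim is correct and essentially complete, and it is the same variational mechanism used there: rotate so that $K_{\mb{Y}}$ is diagonal (legitimate, since $G$ is rotation-invariant), apply a volume-preserving diagonal map $D$, observe that $D\mcal{V}$ is a fundamental but generally non-Voronoi cell of $D\Lambda$ so that $\sigma_{D\Lambda}^2 \le \frac{1}{N}\sum_i d_i^2 \lambda_i$, and squeeze with two applications of AM--GM. The one step you assert without proof --- that the Voronoi cell minimizes the normalized second moment among all fundamental cells --- is immediate from $\|\mb{x}-Q_{\Lambda}(\mb{x})\|\le\|\mb{x}\|$ together with the fact that $\mb{x}\mapsto \mb{x}\bmod\Lambda$ carries any fundamental cell measure-preservingly onto $\mcal{V}$; and the optimizing $D$ exists because every $\lambda_i>0$ (otherwise $\mcal{V}$ would lie in a hyperplane, contradicting $V>0$). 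So there is no gap in the whiteness part. For the second claim your plan is sound but, as you acknowledge, not self-contained --- and neither is the paper's, which simply cites \cite{ZamirFeder_96}. One caution worth recording: Minkowski--Hlawka-type averaging natively controls packing-type quantities, whereas quantization goodness is a covering-type statement --- the mean-squared error of the nearest-neighbor quantizer is governed by how far points can be from the lattice --- so the sentence ``the expected second moment exceeds that of the ball only by a vanishing term'' is exactly where the real work lives (originally handled via Rogers' lattice-covering bounds, later via Construction-A ensembles as in Erez, Litsyn, and Zamir). Deferring that piece to the literature is the right call and matches what the paper itself does.
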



\subsection{Modulo-Lattice Scheme in Each Layer}
In each layer, we shall use the following canonical modulo-lattice scheme, which is a simplified version of that in \cite{PhilosofZamir_07}.

Consider a generic layer $k$ where the subset of participating users is $S^{(k)}\subseteq[1:K]$. The received signal can be written as
\begin{align}
\mb{y} = \sum_{i\in S^{(k)}} \mb{x}^{(k)}_i + \sum_{i\in S^{(k)}} \mb{s}^{(k)}_i + \mb{z}^{(k)}, \label{eq_DecoderInput}
\end{align}
where $\mb{x}^{(k)}_i$ denotes user $i$'s transmit signal in this layer, $\mb{s}^{(k)}_i$ denotes the \emph{interference} in this layer that is known to user $i$, and $\mb{z}^{(k)}$ denotes the effective aggregate \emph{noise} in this layer. All the transmit signals, interferences, and the noise are mutually independent. The difference between interference and noise is that, interference is mitigated using side information precoding, while noise cannot and hence persists in the received signal. As we shall see in the overall architecture of our layered strategy, interferences $\mb{s}^{(k)}_i$ and effective noise $\mb{z}^{(k)}$ will contain the signals sent in other layers, and hence is not necessary Gaussian.

The canonical modulo-lattice scheme is configured by three parameters: (1) an $N$-dimensional lattice $\Lambda^{(k)}$, (2) its second moment $\Theta^{(k)}$, and (3) $S^{(k)}\subseteq[1:K]$, the subset of users participating in the transmission. For each user $i\in S^{(k)}$, its corresponding sub-encoder in this layer uses lattice $\Lambda^{(k)}$ with second moment $\Theta^{(k)}$ and basic Voronoi region $\mcal{V}^{(k)}$ to modulate its sub-message $w_i^{(k)}$ in this layer. Its codeword, $\mb{v}_i^{(k)}$, is generated according to $\mathrm{Unif}\lp\mcal{V}^{(k)}\rp$ with rate $R^{(k)}_{i}$. The transmit signal $\mb{x}_i^{(k)}$ is generated according to the following modulo-lattice operation:
\begin{align}
\mb{x}_i^{(k)} = \lb \mb{v}_i^{(k)} - \alpha^{(k)}\mb{s}_i^{(k)} - \mb{d}_i^{(k)} \rb \bmod \Lambda^{(k)}, \label{eq_ModTx}
\end{align}
where $\mb{d}_i^{(k)} \sim \mathrm{Unif}\lp\mcal{V}^{(k)}\rp$ independent of everything else, is the dither known at the receiver (common randomness).


The corresponding decoder in this layer, upon receiving $\mb{y}$, first multiplies $\mb{y}$ by $\alpha^{(k)}$, adds the dithers back, and then takes the modulo $\Lambda^{(k)}$ operation. The output becomes
\begin{align}
\mb{y}^{(k)} &= \lb \alpha^{(k)}\mb{y} + \sum_{i\in S^{(k)}} \mb{d}_i^{(k)}\rb \bmod \Lambda^{(k)} \label{eq_Decode1}\\
&= \lb \mb{y} - \lp1-\alpha^{(k)}\rp\mb{y} + \sum_{i\in S^{(k)}} \mb{d}_i^{(k)}\rb \bmod \Lambda^{(k)}\\
&= \lb\begin{array}{l} \sum_{i\in S^{(k)}} \lb \mb{v}_i^{(k)} - \alpha^{(k)}\mb{s}_i^{(k)} - \mb{d}_i^{(k)} \rb \bmod \Lambda^{(k)} + \sum_{i\in S^{(k)}} \mb{s}^{(k)}_i + \mb{z}^{(k)}\\
- \lp1-\alpha^{(k)}\rp \lp \sum_{i\in S^{(k)}} \mb{x}^{(k)}_i + \sum_{i\in S^{(k)}} \mb{s}^{(k)}_i + \mb{z}^{(k)}\rp + \sum_{i\in S^{(k)}} \mb{d}_i^{(k)}\end{array}\rb \bmod \Lambda^{(k)}\\
&= \lb \sum_{i\in S^{(k)}} \mb{v}_i^{(k)} + \alpha^{(k)}\mb{z}^{(k)} - \lp1-\alpha^{(k)}\rp \sum_{i\in S^{(k)}} \mb{x}^{(k)}_i \rb \bmod \Lambda^{(k)}\\
&= \lb \sum_{i\in S^{(k)}} \mb{v}_i^{(k)} + \mb{z}_{\rm{eq}}^{(k)} \rb \bmod \Lambda^{(k)}, \label{eq_Decode2}
\end{align}
where $\mb{z}_{\rm{eq}}^{(k)} := \alpha^{(k)}\mb{z}^{(k)} - \lp1-\alpha^{(k)}\rp \sum_{i\in S^{(k)}} \mb{x}^{(k)}_i$ denotes the \emph{effective noise} in the \emph{effective modulo-lattice channel} in layer $k$. From the first line, due to dithers the output signal $\mb{y}^{(k)} \sim \mathrm{Unif}\lp \mcal{V}^{(k)}\rp$. Moreover, $\mb{z}_{\rm{eq}}^{(k)}$ is independent of $\mb{v}_{S^{(k)}}$ due to dithering \cite{PhilosofZamir_07}.




\subsection{Overall Architecture}
Now we are ready to describe the overall architecture of our layered modulo-lattice scheme. 

{\flushleft \it Encoding}\par
For encoding, we shall use an inductive way to describe from the top layer $1$ to the bottom layer $K$, which corresponds to the order of encoding.

{\flushleft 1) Layer $1$:}
In this layer, the set of participating users is $S^{(1)} = \{1\}$. We choose the modulation lattice $\Lambda^{(1)}$ to be the one that attaining the minimal normalized second moment with second moment $\Theta^{(1)} = P_1-P_2$. The interference $\mb{s}_1^{(1)} = \mb{s}_1$. The sub-encoder $\mcal{E}_{1}^{(1)}$ generates $\mb{x}_{1}^{(1)}$ based on \eqref{eq_ModTx} for $k=1$, and feeds $\mb{s}_{1}^{(2)} := \mb{s}_{1}^{(1)} + \mb{x}_{1}^{(1)}$ to the next-layer sub-encoder $\mcal{E}_{1}^{(2)}$.

{\flushleft 2) Layer $k,1<k<K$:}
The set of participating users is $S^{(k)} = [1:k]$. The modulation lattice $\Lambda^{(k)}$ is the one that attaining the minimal normalized second moment with second moment $\Theta^{(k)} = P_k-P_{k+1}$. The known interference $\mb{s}_i^{(k)} = \mb{s}_i^{(k-1)} + \mb{x}_{i}^{(k-1)}$, for all $i\in [1:k-1]$, and $\mb{s}_k^{(k)} = \mb{s}_k$. The sub-encoder $\mcal{E}_{i}^{(k)}$ generates $\mb{x}_{i}^{(k)}$ based on \eqref{eq_ModTx}, and feeds $\mb{s}_{i}^{(k+1)} := \mb{s}_{i}^{(k)} + \mb{x}_{i}^{(k)}$ to the next-layer sub-encoder $\mcal{E}_{i}^{(k+1)}$, for all $i\in S^{(k)}=[1:k]$.

{\flushleft 3) Layer $K$:}
The set of participating users is $S^{(K)} = [1:K]$. The modulation lattice $\Lambda^{(K)}$ is the one that attaining the minimal normalized second moment with second moment $\Theta^{(K)} = P_K$. The known interference $\mb{s}_i^{(K)} = \mb{s}_i^{(K-1)} + \mb{x}_{i}^{(K-1)}$, for all $i\in [1:K-1]$, and $\mb{s}_K^{(K)} = \mb{s}_K$. The sub-encoder $\mcal{E}_{i}^{(K)}$ generates $\mb{x}_{i}^{(K)}$ based on \eqref{eq_ModTx} for $k=K$.

{\flushleft \it Decoding}\par
The receiver decodes layer $k\in[1:K]$ with sub-decoder $\mcal{D}^{(k)}$. Unlike the sequential operation at the sub-encoders, these sub-decoders work \emph{in parallel}. $\mcal{D}^{(k)}$ takes the received signal $\mb{y}$ as input, which can be written as \eqref{eq_DecoderInput}, and takes the operation in $\eqref{eq_Decode1}-\eqref{eq_Decode2}$ to generate $\mb{y}^{(k)}$. With the above-mentioned encoding operations, the effective noise
\begin{align}
\mb{z}^{(k)} &= \lbp\begin{array}{ll}\mb{z} + \sum_{l=k+1}^{K}\lp \mb{s}_l + \sum_{i=1}^l \mb{x}_{i}^{(l)}\rp, & 1\le k \le K-1\\ \mb{z}, & k=K
\end{array}\right.\\
\Cov\lb \mb{z}^{(k)} \rb &= \lbp\begin{array}{ll} \lp N_o + \sum_{l=k+1}^K \lp Q_l + l\Theta^{(l)}\rp\rp I_N, & 1\le k \le K-1\\ N_oI_N, & k=K
\end{array}\right. := N^{(k)}I_N, \label{eq_Cov}
\end{align}
due to our choice of lattices and Lemma \ref{lem_White}. $N^{(k)}$ denotes the effective per-symbol noise variance in layer $k$. Due to dithering, indeed $\mb{x}^{(k)}_{[1:k]}$, $\mb{s}^{(k)}_{[1:k]}$, and $\mb{z}^{(k)}$ are mutually independent. Based on $\mb{y}^{(k)}$, it performs joint typicality decoding as in standard MAC to find $\mb{v}^{(k)}_{S^{(k)}}$, where $S^{(k)} = [1:k]$.

The overall architecture of transmitters and receiver is depicted in Fig.~\ref{fig_ENCk}.

\begin{figure}[htbp]
{\center
\includegraphics[width=6in]{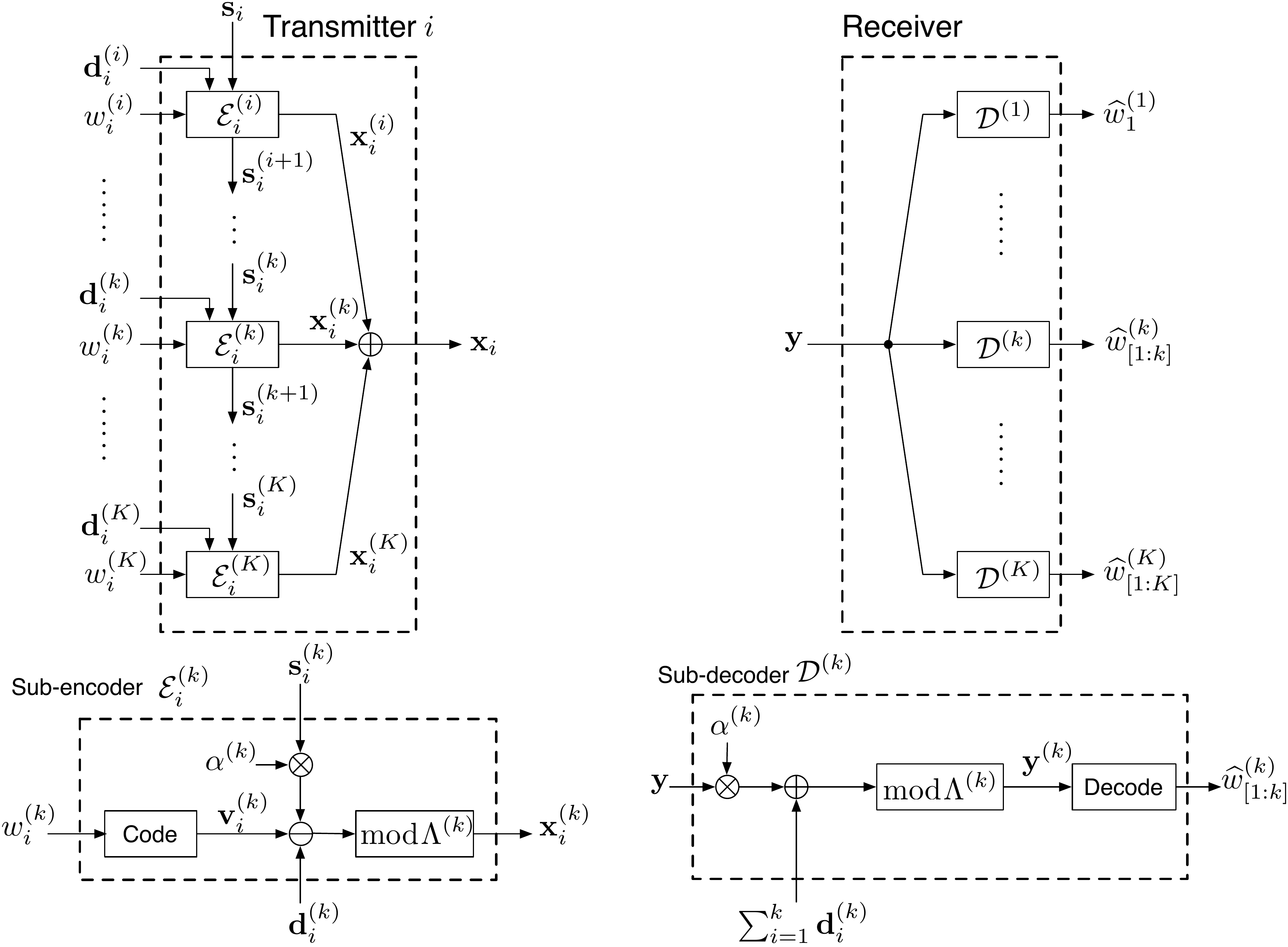}
\caption{Transmitter and Receiver Architecture}
\label{fig_ENCk}
}
\end{figure}

\subsection{Achievable Rates in Each Layer}\label{subsec_LatticeRate}

The achievable rates of the scheme in this layer can be derived following the same line of analysis as in \cite{PhilosofZamir_07}: non-negative rate tuples $R^{(k)}_{[1:k]}$ is achievable, if
\begin{align}
N\sum_{i=1}^k R_i^{(k)} &\le I\lp \mb{v}^{(k)}_{[1:k]}; \mb{y}^{(k)}\rp = h\lp \mb{y}^{(k)}\rp - h\lp \lb \sum_{i=1}^k \mb{v}_i^{(k)} + \mb{z}_{\rm{eq}}^{(k)} \rb \bmod \Lambda^{(k)}\Bigg| \mb{v}^{(k)}_{[1:k]}\rp. \label{eq_Achieve}
\end{align}

Since $\mb{y}^{(k)} \sim \mathrm{Unif}\lp \mcal{V}^{(k)}\rp$, due to Lemma \ref{lem_Entropy}, the first term $h\lp \mb{y}^{(k)}\rp = \frac{N}{2}\log\lp \frac{\Theta^{(k)}}{G\lp \Lambda^{(k)}\rp}\rp$. Moreover, since the modulo operation only reduces the entropy, the second term can be upper bounded as follows:
\begin{align}
&h\lp \lb \sum_{i=1}^k \mb{v}_i^{(k)} + \mb{z}_{\rm{eq}}^{(k)} \rb \bmod \Lambda^{(k)}\Bigg| \mb{v}^{(k)}_{[1:k]}\rp\\
&\le h\lp \sum_{i=1}^k \mb{v}_i^{(k)} + \mb{z}_{\rm{eq}}^{(k)} \Bigg| \mb{v}^{(k)}_{[1:k]}\rp\\
&\overset{\aaaa}{=} h\lp \mb{z}_{\rm{eq}}^{(k)} \rp\\
&\overset{\bbbb}{\le} \frac{N}{2}\log\lp 2\pi e\lp \lp\alpha^{(k)}\rp^2N^{(k)} + \lp1-\alpha^{(k)}\rp^2k\Theta^{(k)} \rp \rp.
\end{align}
(a) is due to the fact that $\mb{z}_{\rm{eq}}^{(k)}$ and $\mb{v}^{(k)}_{[1:k]}$ are independent. (b) is due to the fact that Gaussian distribution is the entropy maximizer for a given covariance matrix, and that the covariance matrix of $\mb{z}_{\rm{eq}}^{(k)} = \alpha^{(k)}\mb{z}^{(k)} - \lp1-\alpha^{(k)}\rp \sum_{i=1}^k \mb{x}^{(k)}_i$ is
\begin{align}
\Cov\lb \mb{z}_{\rm{eq}}^{(k)} \rb = \lp\alpha^{(k)}\rp^2N^{(k)} I_N +  \lp1-\alpha^{(k)}\rp^2 k\Theta^{(k)} I_N, 
\end{align}
based on \eqref{eq_Cov} and Lemma~\ref{lem_White}.

Hence, combining the above two, we obtain a lower bound on the right-hand side of \eqref{eq_Achieve}:
\begin{align}
&\frac{N}{2}\log\lp \frac{\Theta^{(k)}}{G\lp \Lambda^{(k)}\rp}\rp - \frac{N}{2}\log\lp 2\pi e\lp \lp\alpha^{(k)}\rp^2N^{(k)} + \lp1-\alpha^{(k)}\rp^2k\Theta^{(k)} \rp \rp\\
=\ &N\lbp \frac{1}{2}\log\lp \frac{\Theta^{(k)}}{\lp\alpha^{(k)}\rp^2N^{(k)} + \lp1-\alpha^{(k)}\rp^2k\Theta^{(k)}}\rp - \frac{1}{2}\log\lp 2\pi e G\lp \Lambda^{(k)}\rp \rp \rbp.
\end{align}

Based on Lemma \ref{lem_White}, there exists a sequence of lattices satisfying \eqref{eq_Limit}, and therefore all non-negative rates satisfying
\begin{align}
\sum_{i=1}^k R_i^{(k)} &\le \frac{1}{2}\log^+\lp \frac{\Theta^{(k)}}{\lp\alpha^{(k)}\rp^2N^{(k)} + \lp1-\alpha^{(k)}\rp^2k\Theta^{(k)}}\rp
\end{align}
are achievable in layer $k$, $k\in[1:K]$. Note that the optimal choice of $\alpha^{(k)}$ is the MMSE coefficient $\alpha^{(k)} = \frac{\lp N^{(k)}\rp\lp k\Theta^{(k)}\rp}{N^{(k)} + k\Theta^{(k)}}$, and the resulting rate constraint is
\begin{align}
\sum_{i=1}^k R_i^{(k)} &\le \frac{1}{2}\log^+\lp \frac{1}{k} + \frac{\Theta^{(k)}}{N^{(k)}}\rp\\
&= \frac{1}{2}\log^+\lp \frac{1}{k} + \frac{P_k-P_{k+1}}{N_o + \sum_{l=k+1}^{K} Q_l + \sum_{l=k+1}^{K-1} l\lp P_l - P_{l+1}\rp + KP_K}\rp\\
&= \frac{1}{2}\log^+\lp \frac{1}{k} + \frac{P_k-P_{k+1}}{N_o + (k+1)P_{k+1} + Q_{k+1} + \sum_{j=k+2}^K \lp P_j+Q_j\rp}\rp\\
&= \frac{1}{2}\log^+\lp \frac{N_o + kP_{k} + \sum_{j=k+1}^K \lp P_j+Q_j\rp}{k\lp N_o + kP_{k+1} + \sum_{j=k+1}^K \lp P_j+Q_j\rp \rp}\rp.
\end{align}
For notational convenience, we denote $P_{K+1} = \SNR_{K+1}=0$.

In the next section, we derive outer bounds based on similar proof techniques as in the binary expansion model (Section~\ref{subsec_LDCconverse}), derive inner bounds based on the discussion above, and show that they are within a constant number of bits to one another.

\section{Constant Gap to Capacity}\label{sec_ConstGap}
The main result is summarized in the following lemmas and theorem.

\begin{lemma}[Outer Bounds]\label{lem_GCconverse} 
If $R_{[1:K]}\ge 0$ is achievable, it satisfies the following: for all $k \in [1:K]$,
\begin{align}
\sum_{i=k}^K R_i \le \ol{\msf{R}}_k\lp \SNR_{[k:K]}, \INR_{[k+1:K]};K\rp,
\end{align}
where 
\begin{align}
&\ol{\msf{R}}_k\lp \SNR_{[k:K]}, \INR_{[k+1:K]};K\rp := \lbp\begin{array}{l}\frac{1}{2}\log\lp 1+\sum_{i=k+1}^{K}2\lp \SNR_i+\INR_i\rp+\SNR_k\rp\\ - \sum_{i=k+1}^{K}\frac{1}{2}\log^+\lp\frac{\INR_i}{1+\sum_{l=i+1}^{K}2\lp\SNR_l+\INR_i\rp+\SNR_i}\rp\end{array}\rbp
\end{align}
\end{lemma}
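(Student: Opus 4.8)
The plan is to mimic the converse for the binary expansion model in Section~\ref{subsec_LDCconverse} essentially line by line, replacing discrete entropies by differential entropies and the level-counting bounds by Gaussian maximum-entropy estimates. First I would define the ``stripped'' received signal $y_k := \sum_{i=k}^K x_i + \sum_{i=k}^K s_i + z$, the exact analogue of $y_{b,k}$. Starting from Fano's inequality, $N\lp\sum_{i=k}^K R_i - \epsilon_N\rp \le I\lp w_{[k:K]}; y^N | w_{[1:k-1]}\rp$, I would run the identical chain as in steps $\aaaa$--$\dddd$ there: condition on $s_{[1:k-1]}^N$ (independent of $w_{[k:K]}$), peel off the now-known contributions of users $1,\ldots,k-1$ to reduce $y^N$ to $y_k^N$ and drop the conditioning by independence, expand $I\lp w_{[k:K]}; y_k^N\rp = I\lp w_{[k:K]}, s_{[k:K]}^N; y_k^N\rp - I\lp s_{[k:K]}^N; y_k^N | w_{[k:K]}\rp$, and apply the chain rule together with ``conditioning reduces entropy'' and independence across users. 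The one genuinely new feature relative to the deterministic model is the additive noise: here $I\lp w_{[k:K]}, s_{[k:K]}^N; y_k^N\rp = h\lp y_k^N\rp - h\lp z^N\rp$ rather than $H\lp y_{b,k}^N\rp$, so a $-h\lp z^N\rp = -\tfrac{N}{2}\log\lp 2\pi e N_o\rp$ must be carried through. Exactly paralleling \eqref{eq_upterm}, this bookkeeping yields
\begin{align*}
N\lp \sum_{i=k}^K R_i - \epsilon_N \rp \le h\lp y_k^N \,\big|\, s_k^N \rp - h\lp z^N\rp - \sum_{i=k+1}^K I\lp s_i^N; y_i^N \,\big|\, w_{[i:K]}\rp.
\end{align*}

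Next I would bound the two surviving pieces by Gaussian maximum entropy. For the first term I would write $h\lp y_k^N | s_k^N\rp \le h\lp y_k^N - s_k^N\rp$ and observe that $y_k - s_k = x_k + \sum_{i=k+1}^K \lp x_i + s_i\rp + z$ is a sum of mutually independent blocks, since $x_k$ depends only on $\lp w_k,s_k\rp$, each pair $x_i + s_i$ only on $\lp w_i,s_i\rp$, and $z$ on itself. Its covariance is therefore the sum of the individual covariances, and $h\lp \cdot\rp \le \tfrac{N}{2}\log\lp \tfrac{2\pi e}{N}\mathrm{tr}\,\Cov\lp\cdot\rp\rp$ combined with $h\lp z^N\rp = \tfrac{N}{2}\log\lp 2\pi e N_o\rp$ delivers the first term of $\ol{\msf{R}}_k$. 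For the second piece I would lower-bound each $I\lp s_i^N; y_i^N | w_{[i:K]}\rp = h\lp s_i^N\rp - h\lp s_i^N | y_i^N, w_{[i:K]}\rp$: writing $y_i = s_i + T_i$ with $T_i := x_i + \sum_{l>i}\lp x_l + s_l\rp + z$, translation invariance of differential entropy gives $h\lp s_i^N | y_i^N, w_{[i:K]}\rp = h\lp T_i^N | y_i^N, w_{[i:K]}\rp \le h\lp T_i^N\rp$, which I again control by the trace-of-covariance estimate. Since $h\lp s_i^N\rp = \tfrac{N}{2}\log\lp 2\pi e Q_i\rp$ and mutual information is nonnegative, this produces the $\tfrac12\log^+\lp\cdot\rp$ terms with denominator $1 + \SNR_i + \sum_{l=i+1}^K 2\lp\SNR_l + \INR_l\rp$. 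Dividing by $N$ and letting $N\to\infty$ (so $\epsilon_N\to 0$) finishes the argument.

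The main obstacle, and the place where the Gaussian proof departs from the clean level-counting of the deterministic case, is the covariance bookkeeping for the sums $x_i + s_i$. Because transmitter $i$ precodes against $s_i$, the signal $x_i$ and the interference $s_i$ are statistically \emph{dependent}, so I cannot simply add $P_i$ and $Q_i$; instead I must use the crude bound $\mathrm{tr}\,\Cov\lp x_i^N + s_i^N\rp \le 2\,\E\|x_i^N\|^2 + 2\,\E\|s_i^N\|^2 \le 2N\lp P_i + Q_i\rp$. This factor of $2$ is exactly the origin of the $2\lp\SNR_i + \INR_i\rp$ terms in $\ol{\msf{R}}_k$, and is the slack that ultimately confines the result to a constant gap rather than an exact characterization. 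A secondary point to verify carefully is that the block-independence needed for ``covariance of a sum equals sum of covariances'' really survives the conditioning, i.e.\ that grouping each $x_l + s_l$ into a single summand keeps the summands mutually independent despite the within-pair dependence.
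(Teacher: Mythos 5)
Your proposal follows the paper's own proof essentially step for step: the same stripped signal $y_k$, the same chain of inequalities carried over from the binary expansion converse (with the extra $-h\lp z^N\rp$ term), the same translation-invariance bound $h\lp s_i^N \mid y_i^N, w_{[i:K]}\rp \le h\lp T_i^N\rp$ combined with nonnegativity of mutual information to produce the $\log^+$ terms, and the same Gaussian maximum-entropy estimate with the $\Var\lb x_i^N+s_i^N\rb \le 2\Var\lb x_i^N\rb + 2\Var\lb s_i^N\rb$ bound that accounts for the precoding-induced dependence and the factor of $2$ in $\ol{\msf{R}}_k$. The argument is correct and matches the paper's Appendix~\ref{app_Pf_lem_GCconverse}.
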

\begin{proof}
The technique is similar to the converse proof for the binary expansion model. See Appendix~\ref{app_Pf_lem_GCconverse} for detail.
\end{proof}

\begin{lemma}[Inner Bounds]\label{lem_GCachieve}
If $R_{[1:K]}\ge 0$ satisfies the following: for all $k \in [1:K]$,
\begin{align}
\sum_{i=k}^K R_i \le \ul{\msf{R}}_k\lp \SNR_{[k:K]}, \INR_{[k+1:K]};K\rp
\end{align}
it is achievable. Here
\begin{align}
&\ul{\msf{R}}_k\lp \SNR_{[k:K]}, \INR_{[k+1:K]};K\rp := \begin{array}{l}\sum_{i=k}^K \frac{1}{2}\log^+\lp \frac{1 + i\SNR_{i} + \sum_{j=i+1}^K \lp \SNR_j+\INR_j\rp}{i\lp 1 + i\SNR_{i+1} + \sum_{j=i+1}^K \lp \SNR_j+\INR_j\rp \rp}\rp\end{array}
\end{align}
\end{lemma}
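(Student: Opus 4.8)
The plan is to assemble the per-layer achievable rates already derived in Section~\ref{subsec_LatticeRate} into a global rate region by exactly the layering argument used for the binary expansion model in Section~\ref{subsec_LDCAchieve}. First I would record the per-layer capacity in normalized form: dividing numerator and denominator of the last display of Section~\ref{subsec_LatticeRate} by $N_o$ and substituting $\SNR_i = P_i/N_o$, $\INR_i = Q_i/N_o$, layer $l$ admits every non-negative tuple $R^{(l)}_{[1:l]}$ obeying the single sum constraint $\sum_{i=1}^l R_i^{(l)} \le C^{(l)}$, where
\[
C^{(l)} := \frac{1}{2}\log^+\lp \frac{1 + l\SNR_{l} + \sum_{j=l+1}^K \lp \SNR_j+\INR_j\rp}{l\lp 1 + l\SNR_{l+1} + \sum_{j=l+1}^K \lp \SNR_j+\INR_j\rp \rp}\rp.
\]
By construction $S^{(l)} = [1:l]$, so user $i$ is active exactly in layers $l \in \{i,i+1,\ldots,K\}$, and its overall rate is the aggregate $R_i = \sum_{l=i}^K R_i^{(l)}$. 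Since the sub-decoders run in parallel and the effective-noise computation \eqref{eq_Cov} together with Lemma~\ref{lem_White} guarantees that the per-layer signals and noises are mutually independent, the $K$ layer codes superimpose without loss, and the scheme achieves every $R_{[1:K]}\ge 0$ in the projection of the layer-rate polytope onto the $R_{[1:K]}$ coordinates.

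Next I would carry out that projection by Fourier--Motzkin elimination of the auxiliary variables $\{R_i^{(l)}:1\le i\le l\le K\}$, mirroring the binary-expansion case. The inclusion ``$\subseteq$'' is immediate by reversing the order of summation,
\[
\sum_{i=k}^K R_i = \sum_{i=k}^K \sum_{l=i}^K R_i^{(l)} = \sum_{l=k}^K \sum_{i=k}^l R_i^{(l)} \le \sum_{l=k}^K \sum_{i=1}^l R_i^{(l)} \le \sum_{l=k}^K C^{(l)},
\]
and relabeling $l\to i$ shows $\sum_{l=k}^K C^{(l)} = \ul{\msf{R}}_k\lp \SNR_{[k:K]}, \INR_{[k+1:K]};K\rp$, which is precisely the claimed family of inner-bound constraints. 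For the reverse ``$\supseteq$'' direction I would establish feasibility: given any $R_{[1:K]}\ge 0$ obeying all $K$ suffix constraints, the demands $R_i$ can be filled from the layer budgets $C^{(l)}$. Because user $i$'s admissible layers form the nested chain $\{i,\ldots,K\}\supseteq\{i+1,\ldots,K\}\supseteq\cdots$, a Hall-type / max-flow--min-cut argument shows that the only binding cuts are the suffixes $\{k,\ldots,K\}$, so feasibility is equivalent to $\sum_{i=k}^K R_i \le \sum_{l=k}^K C^{(l)}$ for every $k$, which is exactly the hypothesis.

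The main obstacle---and the only place where the Gaussian argument could conceivably diverge from its binary-expansion template---is confirming that the Fourier--Motzkin projection collapses to these suffix-sum inequalities rather than to a larger family of constraints. This is where the nested layer-access structure $S^{(l)}=[1:l]$ is essential: it forces every non-redundant cut to be a suffix $\{k,\ldots,K\}$, so no cross-layer constraints survive the elimination. Once this is in hand, the remaining work is purely the bookkeeping identity $\ul{\msf{R}}_k = \sum_{l=k}^K C^{(l)}$ together with the per-layer independence already established via \eqref{eq_Cov} and Lemma~\ref{lem_White}, and the lemma follows.
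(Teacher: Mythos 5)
Your proposal is correct and follows essentially the same route as the paper, which simply aggregates the per-layer rates $R_i=\sum_{l=i}^K R_i^{(l)}$ from Section~\ref{subsec_LatticeRate} and invokes Fourier--Motzkin elimination. You merely spell out what the paper leaves implicit: the summation-interchange for necessity and the nested-suffix (Hall-type) feasibility argument for sufficiency, both of which are valid and match the intended elimination.
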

\begin{proof}
Based on Section~\ref{subsec_LatticeRate}, user $i$'s aggregate rate $R_i$ is the sum of rates in all layers in which it participates, that is, layer $i$ to layer $K$: $R_i = \sum_{l=i}^K R_i^{(l)}$. Applying Fourier-Motzkin elimination, we complete the proof.
\end{proof}

\begin{theorem}[Constant Gap to Capacity]\label{thm_GC}
$ $\\
The above inner and outer bounds are within $(K-k+1)\lp\log K+\frac{1}{2}\rp$ bits for user $k$, for all $k\in[1:K]$.
\end{theorem}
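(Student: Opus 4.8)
**

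The plan is to show that for each $k\in[1:K]$, the gap between the outer bound $\ol{\msf{R}}_k$ of Lemma~\ref{lem_GCconverse} and the inner bound $\ul{\msf{R}}_k$ of Lemma~\ref{lem_GCachieve} is at most $(K-k+1)\lp\log K+\frac{1}{2}\rp$ bits. Since both bounds constrain the same partial sum $\sum_{i=k}^K R_i$, it suffices to bound the difference $\ol{\msf{R}}_k - \ul{\msf{R}}_k$ term by term. The natural strategy, mirroring the clean telescoping identity already established for the binary expansion model in the proof of Theorem~\ref{thm_LDC}, is to compare the two expressions on a per-layer basis: each of the $K-k+1$ layers from $k$ to $K$ should contribute a gap of at most $\log K + \frac{1}{2}$ bits. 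The $\frac{1}{2}$ presumably accounts for the $\frac{1}{2}\log 2\pi e G(\Lambda)$ quantization penalty in the lattice analysis together with the $+1$'s inside the logarithms, while the $\log K$ accounts for the $\frac{1}{k}$ factor (equivalently the $\log k \le \log K$ loss) appearing in the achievable per-layer rate derived in Section~\ref{subsec_LatticeRate}.

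First I would rewrite both $\ol{\msf{R}}_k$ and $\ul{\msf{R}}_k$ in a matched form so that their layer-by-layer structure is transparent. The inner bound is already a sum $\sum_{i=k}^K$ of per-layer terms of the form $\frac{1}{2}\log^+\bigl(\frac{1+i\SNR_i+\sum_{j>i}(\SNR_j+\INR_j)}{i(1+i\SNR_{i+1}+\sum_{j>i}(\SNR_j+\INR_j))}\bigr)$. The outer bound is written as a single leading log term minus a sum of subtracted terms; the key reduction is to express this as a telescoping sum over $i$ from $k$ to $K$, exactly as was done in the discrete case where $\ol{\msf{r}}_{l-1}-\ol{\msf{r}}_l$ collapsed to a single $(\cdot)^+$ term. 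Once both are sums of $K-k+1$ comparable pieces indexed by the same $i$, I would bound the difference of the $i$-th pieces. For each fixed $i$, the comparison reduces to bounding a ratio of two logarithms whose numerators agree up to the replacement of $\SNR_i$ by $\SNR_{i+1}$ and up to additive constants of order $1$; the factor $\frac{1}{i}$ inside the achievable log, once pulled out, yields the $\log i \le \log K$ contribution, and the $+1$ offsets together with the lattice penalty yield the $\frac{1}{2}$.

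The main obstacle I anticipate is handling the $\log^+$ and $(\cdot)^+$ operations cleanly across the regimes where individual terms vanish. In the binary expansion model the positive-part bookkeeping was exact and integer-valued, so the telescoping identity held on the nose; in the Gaussian case the $\log^+$ truncations in the inner bound and the $\log^+$ in the subtracted terms of the outer bound need not activate at the same channel-parameter thresholds, so a naive term-by-term subtraction can be off. The careful step is to verify that whenever a subtracted outer-bound term is clamped to zero by its positive part, the corresponding inner-bound term is also appropriately small (or clamped), so that the per-layer gap of $\log K + \frac{1}{2}$ is never exceeded in any regime. I would discharge this by splitting into the cases $\INR_i \le 1 + \sum_{l>i}2(\SNR_l+\INR_i)+\SNR_i$ and its complement for each $i$, checking the bound in each. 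A secondary but routine nuisance is that the outer bound carries factors of $2$ in front of $\SNR_i+\INR_i$ (from bounding sums of independent terms) whereas the inner bound does not; these mismatched constant multipliers must be absorbed into the additive $\frac{1}{2}$-per-layer slack rather than producing a multiplicative loss, which is exactly what the $\log$ of a bounded ratio permits.
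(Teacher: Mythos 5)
Your proposal follows essentially the same route as the paper's Appendix B: telescope the inner-bound sum so that both bounds share a leading $\frac{1}{2}\log(1+\cdot)$ term, absorb the outer bound's factor of $2$ in front of $\SNR_i+\INR_i$ into an additive $\frac{1}{2}$ per layer, and case-split on whether each $\log^+$ correction term is active, which is exactly the paper's $\zeta_i-\xi_i$ analysis. One minor correction to your budget accounting: the $\frac{1}{2}$ per layer comes entirely from that factor of $2$ (the lattice penalty $\frac{1}{2}\log 2\pi e\, G(\Lambda)$ vanishes as $N\to\infty$ and is already absorbed into the statement of the inner bound), and the $\log K$ per layer is contributed half by the $1/i$ factor ($\frac{1}{2}\log i$) and half by the large-$\INR_i$ branch of the case analysis ($\zeta_i-\xi_i\le\frac{1}{2}\log(i+1)$), not by the $1/i$ factor alone.
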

\begin{proof}
See Appendix~\ref{app_Pf_thm_GC}.
\end{proof}

\begin{remark}
An alternative way to show the inner and outer bounds are within a constant is using the binary expansion model as an interface. Under the conversion in Definition~\ref{def_LDC}, it turns out that the outer bounds in Lemma~\ref{lem_LDCconverse} and Lemma~\ref{lem_GCconverse} are within a constant number of bits, as well as the inner bounds in Lemma~\ref{lem_LDCachieve} and Lemma~\ref{lem_GCachieve}. Then by Theorem~\ref{thm_LDC}, which shows that the inner and outer bounds match in the binary expansion model, it is immediate to establish the constant-gap-to-optimality result in the Gaussian scenario. Moreover, it justifies the usage of the binary expansion model in solving this problem, in the sense that its capacity region uniformly approximate that of the original Gaussian model. 
\end{remark}

\section{Conclusion}\label{sec_Conclude}
Costa's landmark paper \cite{Costa_83} demonstrates that with proper precoding, in the point-to-point AWGN channel the effect of additive interference can be mitigated as if there were no interference, as long as the interference is known to the transmitter non-causally. In the multi-user scenario, however, when the interference is known partially to each node in the network, such conclusion no longer holds. Moreover, in the two-user doubly-dirty MAC, Philosof \etal \cite{PhilosofZamir_07} shows that a natural extension of Costa's Gaussian random binning scheme performs unboundedly worse than a lattice-based strategy. 

In this paper, we make a step further from \cite{PhilosofZamir_07}. We study the $K$-user Gaussian MAC with $K$ independent additive Gaussian interferences each of which known to exactly one transmitter non-causally, which is an extension of the two-user doubly-dirty MAC. With the help of a binary expansion model of the original problem, we propose a layered modulo-lattice scheme that realizes distributed interference cancellation, and characterize the capacity region to within a constant gap, for arbitrary channel parameters. 
The binary expansion model uncovers the underlying layered structure of the original Gaussian problem, which leads naturally to the layered architecture and the converse proof.



\bibliographystyle{ieeetr}

\newpage

\appendices
\section{Proof of Lemma~\ref{lem_GCconverse}}\label{app_Pf_lem_GCconverse}
Let 
\begin{align}
y_k := \sum_{i=k}^K x_i + \sum_{i=k}^K s_i + z.
\end{align}

If $R_{[1:K]}$ is achievable, for any $k \in [1:K]$ by Fano's inequality and data processing inequality, we have
\begin{align}
&N\lp \sum_{i=k}^K R_i -\epsilon_N\rp\\ 
&\le I\lp w_{[k:K]}; y^N | w_{[1: k-1]}\rp\\
&\overset{\aaaa}{\le} I\lp w_{[k:K]}; y^N | w_{[1: k-1]} , s^N_{[1:k-1]}\rp\\
&= h\lp y^N | w_{[1: k-1]} , s^N_{[1:k-1]}\rp - h\lp y^N | w_{[1: K]} , s^N_{[1:k-1]}\rp\\
&= h\lp y_k^N | w_{[1: k-1]} , s^N_{[1:k-1]}\rp - h\lp y_k^N | w_{[1: K]} , s^N_{[1:k-1]}\rp\\
&\overset{\bbbb}{=} I\lp w_{[k:K]}; y_k^N\rp = I\lp w_{[k:K]}, s_{[k:K]}^N; y_k^N\rp - I\lp s_{[k:K]}^N; y_k^N| w_{[k:K]}\rp\\
&\overset{\cccc}{=} h\lp y_k^N\rp - h\lp z^N\rp - \sum_{i=k}^K h\lp s_i^N\rp + h\lp s_{[k:K]}^N | y_k^N, w_{[k:K]}\rp\\
&= h\lp y_k^N\rp - h\lp z^N\rp - \sum_{i=k}^K h\lp s_i^N\rp + \sum_{i=k}^K h\lp s_i^N | y_k^N, w_{[k:K]}, s_{[k:i-1]}^N\rp\\
&\overset{\dddd}{\le} - h\lp z^N\rp + h\lp y_k^N\rp - h\lp s_k^N\rp + h\lp s_k^N|y_k^N\rp - \sum_{i=k+1}^{K} h\lp s_i^N\rp + \sum_{i=k+1}^{K} h\lp s_i^N | y_i^N, w_{[i:K]}\rp\\
&\overset{\eeee}{\le} - h\lp z^N\rp + h\lp y_k^N| s_k^N\rp - \sum_{i=k+1}^{K} h\lp s_i^N\rp\\
&\quad + \sum_{i=k+1}^{K} \min\lbp h\lp s_i^N \rp, h\lp x_i^N + \sum_{l=i+1}^{K} \lp x_l^N+s_l^N\rp + z^N \rp \rbp\\
&\overset{\ffff}{\le} N \ol{\msf{R}}_k\lp \SNR_{[k:K]}, \INR_{[k+1:K]};K\rp,
\end{align}
where $\epsilon_N\rightarrow 0$ as $N\rightarrow \infty$. (a) is due to the facts that conditioning reduces entropy and that $s^N_{[1:k-1]}$ is independent of $w_{[k:K]}$. (b) is due to the fact that $\lp w_{[k:K]}, s_{[k:K]}^N,y_k^N\rp$ and $\lp w_{[1:k-1]}, s_{[1:k-1]}^N\rp$ are independent. (c) is due to the fact that $\lbp w_{[k:K]},s_{[k:K]}^N \rbp$ are mutually independent and $y_k^N$ is a function of $\lbp w_{[k:K]},s_{[k:K]}^N \rbp$. (d) is due to conditioning reduces entropy and the fact that $\lp w_{[i:K]}, s_{[i:K]}^N,y_i^N\rp$ and $\lp w_{[k:i-1]}, s_{[k:i-1]}^N\rp$ are independent. (e) is due to the fact that $y_i^N = x_i^N + s_i^N + \sum_{l=i+1}^{K} \lp x_l^N+s_l^N\rp + z^N$. Finally, (f) is due to the fact that 
\begin{align*}
h\lp y_k^N| s_k^N\rp = h\lp x_k^N + \sum_{i=k+1}^K \lp x_i^N + s_i^N\rp + z^N\Big| s_k^N\rp \le h\lp x_k^N + \sum_{i=k+1}^K \lp x_i^N + s_i^N\rp + z^N\rp,
\end{align*}
Gaussian distribution maximizes the unconditional entropy, and $\Var\lb x_i^N + s_i^N\rb \le 2\Var\lb x_i^N\rb + 2\Var\lb s_i^N\rb$ for any $i$. Proof complete.

\section{Proof of Theorem~\ref{thm_GC}}\label{app_Pf_thm_GC}
We shall evaluate and upper bound the gap 
\begin{align}
\delta_k := \ol{\msf{R}}_k\lp \SNR_{[k:K]}, \INR_{[k+1:K]};K\rp - \ul{\msf{R}}_k\lp \SNR_{[k:K]}, \INR_{[k+1:K]}; K\rp.
\end{align}

For notational convenience, we denote $\Upsilon_i := \sum_{j=i}^K \lp\SNR_j+\INR_j\rp$.

First note that $\ul{\msf{R}}_k\lp \SNR_{[k:K]}, \INR_{[k+1:K]}; K\rp$ can be lower bounded by
\begin{align}
&\ul{\msf{R}}_k\lp \SNR_{[k:K]}, \INR_{[k+1:K]}; K\rp\\
&\ge \frac{1}{2}\log\lp 1 + k\SNR_k + \Upsilon_{k+1} \rp - \sum_{i=k}^K\frac{1}{2}\log i - \sum_{i=k+1}^{K}\frac{1}{2}\log\lp\frac{1 + i\SNR_i + \INR_i + \Upsilon_{i+1}}{1 + i\SNR_i + \Upsilon_{i+1}} \rp\\
&\ge \frac{1}{2}\log\lp 1 + \SNR_k + \Upsilon_{k+1} \rp - \sum_{i=k}^K\frac{1}{2}\log i - \sum_{i=k+1}^{K}\frac{1}{2}\log\lp\frac{1 + i\SNR_i + \INR_i + \Upsilon_{i+1}}{1 + i\SNR_i + \Upsilon_{i+1}} \rp.
\end{align}
Also,
\begin{align}
&\ol{\msf{R}}_k\lp \SNR_{[k:K]}, \INR_{[k+1:K]};K\rp\\
&= \frac{1}{2}\log\lp 1+2\Upsilon_{k+1}+\SNR_k\rp - \sum_{i=k+1}^{K}\frac{1}{2}\log^+\lp\frac{\INR_i}{1+2\Upsilon_{i+1}+\SNR_i}\rp\\
&\le \frac{1}{2}\log\lp 1+\Upsilon_{k+1}+\SNR_k\rp - \sum_{i=k+1}^{K}\frac{1}{2}\log^+\lp\frac{\INR_i}{1+\Upsilon_{i+1}+\SNR_i}\rp +\frac{1}{2}(K-k+1).
\end{align}
Hence,
\begin{align}
\delta_k 
&\le \sum_{i=k+1}^K\frac{1}{2}\log i
+\sum_{i=k+1}^{K}\frac{1}{2}\log\lp\frac{1 + \Upsilon_{i+1} + i\SNR_i + \INR_i}{1 + \Upsilon_{i+1} + i\SNR_i}\rp\\
&\quad - \sum_{i=k+1}^{K}\frac{1}{2}\log^+\lp\frac{\INR_i}{1+\Upsilon_{i+1}+\SNR_i}\rp +\frac{1}{2}(K-k+1)\\
&= \sum_{i=k}^K\frac{1}{2}\log i + \sum_{i=k+1}^{K} \lp \zeta_i - \xi_i\rp +\frac{1}{2}(K-k+1),
\end{align}
where $\zeta_i:=\frac{1}{2}\log\lp\frac{1 + \Upsilon_{i+1} + i\SNR_i + \INR_i}{1 + \Upsilon_{i+1} + i\SNR_i}\rp$ and $\xi_i:=\frac{1}{2}\log^+\lp\frac{\INR_i}{1+\Upsilon_{i+1}+\SNR_i}\rp$.

\begin{itemize}
\item [1)]
If $\INR_i \le 1+\Upsilon_{i+1}+\SNR_i$, then $\xi_i=0$, and $\zeta_i \le \frac{1}{2}\log\lp 1+1\rp = \frac{1}{2}$.

\item [2)]
If $\INR_i > 1+\Upsilon_{i+1}+\SNR_i$, then
\begin{align}
\zeta_i - \xi_i &= \frac{1}{2}\log\lp \frac{\lp1 + \Upsilon_{i+1} + i\SNR_i + \INR_i\rp\lp1+\Upsilon_{i+1}+\SNR_i\rp}{\INR_i\lp1 + \Upsilon_{i+1} + i\SNR_i\rp} \rp\\
&\le \frac{1}{2}\log\lp \frac{1 + \Upsilon_{i+1} + i\SNR_i + \INR_i}{\INR_i} \rp\\
&\le \frac{1}{2}\log\lp i+1\rp.
\end{align}
\end{itemize}

Therefore, combining 1) and 2), for all $i\in[k+1:K]$, $\zeta_i-\xi_i \le \frac{1}{2}\log\lp i+1\rp$. Hence,
\begin{align}
\delta_k 
&\le \sum_{i=k}^K\frac{1}{2}\log i + \sum_{i=k+1}^{K} \frac{1}{2}\log\lp i+1 \rp+\frac{1}{2}(K-k+1)\\
&\le \sum_{i=k}^K\frac{1}{2}\log K + \sum_{i=k+1}^{K-1}\frac{1}{2}\log K + \frac{1}{2}\log\lp K+1\rp+\frac{1}{2}(K-k+1)\\
&\le \lp K-k+1\rp\log K+\frac{1}{2}(K-k+1),
\end{align}
since $K+1 \le K^2$ for $K\ge2$.

\end{document}